\newcommand{\p}{\partial}
\newcommand{\dd}{{\rm d}}
\newcommand{\inner}[2]{\langle #1, #2 \rangle}
\newcommand{\laplace}{\Delta}
\newcommand{\hodgestar}{\star}
\newcommand{\harm}{\mathcal{H}}
\begin{document}

\title{Compact Cauchy horizons admit constant surface gravity}

\author{R. A. Hounnonkpe\thanks{Universit\'e d'Abomey-Calavi, Abomey-Calavi, B\'enin and Institut de Math\'ematiques et de Sciences Physiques (IMSP), Porto-Novo, B\'enin. ORCID ID: 0000-0002-5013-8329
 \email{rhounnonkpe@ymail.com}} and E. Minguzzi\thanks{Dipartimento di Matematica, Universit\`a degli Studi di Pisa,  Largo
B. Pontecorvo 5,  I-56127 Pisa, Italy. ORCID ID: 0000-0002-8293-3802 \email{ettore.minguzzi@unipi.it}}}

\institute{}

\date{}

\maketitle

\begin{abstract}
\noindent
We prove that in any spacetime dimension and under the null energy condition, every totally geodesic connected smooth compact null hypersurface (hence every compact Cauchy horizon) admits a smooth lightlike tangent vector field of constant surface gravity. That is, we solve the open degenerate case by showing that, if there is a complete generator, then there exists a smooth future-directed geodesic lightlike tangent field. The result can be stated as an existence result for a particular cohomological equation. The proof uses elements of ergodic theory, Hodge theory and Riemannian flow theory. We emphasize that, remarkably, these results really require only the null energy condition, whereas previous works assumed, already in the Killing or the non-degenerate cases, the stronger dominant energy condition.
\end{abstract}


\section{Introduction}

The study of closed lightlike geodesics has long posed a fundamental and still unresolved problem in general relativity.

Consider a closed, affinely parametrized lightlike geodesic segment \(\gamma: [0,\lambda] \to M\), $t\mapsto \gamma(t)$, with \(\gamma(\lambda) = \gamma(0)\) and \(\dot{\gamma}(\lambda) = a \dot{\gamma}(0)\) for some \(a > 1\). As observed by Hawking \cite{hawking66b,hawking73}, the future-inextendible geodesic \(\tilde{\gamma}\) winds around \(\gamma\) infinitely many times. On the first loop, it accumulates an affine length \(\lambda\); on the second, \(\lambda/a\); and so on. The total affine length is therefore \(\lambda \sum_{i=0}^\infty (1/a)^i = \lambda \frac{a}{a-1}\), which is finite. A similar argument in the backward direction shows that the inextendible lightlike geodesic covering \(\gamma\) is future incomplete but past complete.

In this future-incomplete case, the lightlike vector field
\[
n(t) = \left(1 - \tfrac{t}{\lambda}\left(1 - \tfrac{1}{a}\right)\right) \dot{\gamma}(t)
\]
is well-defined over the entire loop and satisfies the pregeodesic equation
\[
\nabla_n n = \kappa n, \qquad \kappa = -\tfrac{1}{\lambda}\left(1 - \tfrac{1}{a}\right),
\]
where \(\kappa\) (the surface gravity) is a negative constant.

For \(0<a < 1\), the situation is analogous: the inextendible lightlike geodesic covering \(\gamma\) is past incomplete but future complete, and a future-directed pregeodesic field \(n\) can be found with a constant positive \(\kappa\) in the pregeodesic equation.

Only in the case \(a = 1\) is the inextendible lightlike geodesic complete in both directions. Here, we may take \(n = \dot{\gamma}\), resulting in a zero coefficient in the pregeodesic equation (i.e., \(n\) is geodesic).

The central question is: {\em Does this picture hold for lightlike geodesics constrained to a compact null hypersurface?} Specifically, is it still possible to find a field \(n\) satisfying \(\nabla_n n = \kappa n\) with \(\kappa\) constant everywhere on the hypersurface? If so, it follows readily that all geodesics on the null hypersurface must share the same (in)completeness properties, with these properties determined by the sign of \(\kappa\) as above.

Of course, the general case is far more complex. The generators of the null hypersurface need not close, meaning the hypersurface may exhibit highly intricate dynamics. Nevertheless, as we shall see, under the suitable energy conditions and for compact Cauchy horizons, the answer to this question is affirmative.

Understanding the completeness properties of geodesics on compact null hypersurfaces (and their neighborhoods) is of paramount importance. For instance, it could ultimately lead to the removal or relaxation of the null genericity condition in singularity theorems such as those of Hawking and Penrose—a condition that has faced physical criticism \cite[Remark 6.22]{minguzzi18b}.

The study of compact Cauchy horizons gives  also insights into stationary black holes, as their event horizon can be compactified via a standard trick \cite{friedrich99}. In fact, the compactification strategy has already served several purposes, from the proof of the regularity of the event horizons of stationary black holes, to the proof of Hawking’s local rigidity theorem \cite{petersen19,minguzzi24}.

The  problem of establishing the constant surface gravity of compact horizons is also motivated by broader considerations. The {\em strong cosmic censorship}  conjecture (SCC), originally proposed by Penrose \cite{penrose69,penrose79}, asserts that for generic initial data in general relativity, the maximal globally hyperbolic development of a spacetime cannot be extended. This principle encapsulates the idea that, under typical physical conditions, the future evolution of spacetime is uniquely determined by its past. The existence of exact solutions containing Cauchy horizons is understood as a consequence of their non-generic nature. If SCC holds, the formation of compact Cauchy horizons must be highly exceptional, occurring only under very special circumstances.

A precise formulation of this expectation was given by Isenberg and Moncrief \cite{moncrief83}, who conjectured that any spacetime admitting a non-empty compact Cauchy horizon must exhibit a Killing symmetry—making it non-generic—and that the horizon itself must be a Killing horizon. Their initial proof established this result under three key assumptions:
\begin{itemize}
\item[(a)] the metric and horizon are analytic,
\item[(b)] Einstein's (electro-)vacuum equations are satisfied,
\item[(c)] the horizon generators are closed.
\end{itemize}

Subsequent progress has relaxed these restrictions. Notably, it was shown that smoothness (analyticity) of the horizon follows from the smoothness (resp. analyticity)  of the metric when the null energy condition holds \cite{larsson14,minguzzi14d}, rendering the assumption on the regularity of the horizon in (a) unnecessary. Under this energy condition, compact Cauchy horizons are smooth, totally geodesic null hypersurfaces, which has become the standard setting for further analysis.

The totally geodesic property of such horizons can be expressed by the equation
\begin{equation} \label{pprt}
\nabla_X n = \omega(X) n,
\end{equation}
where \(X\) is any vector field tangent to the horizon \(H\), \(n\) is a future-directed lightlike generator, and \(\omega: H \to T^*H\) is a 1-form on \(H\). Here, \(\nabla\) may be interpreted equivalently as the spacetime Levi-Civita connection or the induced connection on \(H\), with no ambiguity due to the totally geodesic condition. The {\em surface gravity} \(\kappa\) is defined as the contraction \(\kappa := \omega(n)\). Under a rescaling \(n' = e^{-f} n\) of the generator, the 1-form and surface gravity transform as
\begin{align}
    &\omega' = \omega - \mathrm{d}f, \label{ren1} \\
    &\kappa' = e^{-f} (\kappa - n(f)). \label{ren2}
\end{align}

A central question is whether, under the dominant energy condition, \(\kappa\) can be normalized to a constant with a suitable choice of $n$, which is a prerequisite for the horizon to be Killing \cite{bardeen73,chrusciel20}. Establishing constant surface gravity
might thus open the way to
extending \(n\) to a Killing field via PDE techniques.

It should be noted that if $n$ can be found of constant surface gravity $\kappa$ over a {\em compact} horizon, then
\[
\nabla_n n =\kappa n,
\]
 and under this equation, taking advantage of the completeness of $n$ (by standard ODE theory) it is possible to show that   the generators  are necessarily all affine complete for $\kappa =0$ or all affine incomplete for $\kappa \ne 0$.

Indeed, with the objective of proving constant surface gravity, compact horizons have been  classified as {\em degenerate} (if at least one generator is complete) or {\em non-degenerate} (if at least one generator is incomplete). Isenberg and Moncrief demonstrated a {\em dichotomy} under their original assumptions:  generators are either all complete (degenerate case) or all incomplete (non-degenerate case) (in which case they are all future-incomplete or all past-incomplete)  and ultimately proved that compact horizons are Killing (and, in particular, of constant surface gravity) \cite{moncrief83}.

Over four decades, their collaborative work \cite{moncrief83,isenberg85,isenberg92,moncrief08,moncrief20} progressively weakened assumptions (a) and (c), introducing innovative tools such as {\em Gaussian null coordinates} and the {\em ribbon argument}. Recent breakthroughs by Bu\-sta\-mante–Rei\-ris and by Gurriaran and the second author \cite{minguzzi21} established the validity of the dichotomy in the smooth category and under the dominant energy condition. Further, the non-degenerate case was solved in full generality, proving the constancy of \(\kappa\) (with \(\kappa \neq 0\)) under the said assumptions and in any dimensions.
 Under the additional vacuum assumption, Petersen and R\'acz  \cite{petersen18,petersen18b,petersen19} then completed the extension of \(n\) to a Killing field, settling the Isenberg–Moncrief conjecture for non-degenerate horizons.

We can already improve the previous results by observing that constancy of surface gravity in the non-degenerate case (and, as we shall see, also in the degenerate case), holds already under the null energy condition.

The identity for totally geodesic null hypersurfaces \cite[Lemma 3]{minguzzi21}
\begin{equation} \label{nnrt}
Ric(n,X)=\dd \omega(n,X), \quad X\in TH,
\end{equation}
will be useful. Using the dominant energy condition it is possible to derive the condition $Ric(n,\cdot)\vert_{TH}=0$ (here $\vert_{TH}$ denotes restriction to the subbundle $TH$) which, due to (\ref{nnrt}), is equivalent to
\[
i_n\dd \omega=0,
\]
 see \cite[Lemma 4-5]{minguzzi21}. In fact, the dominant energy condition in \cite{minguzzi21} is used just to derive this equation and for no other implications.



However, we have the following simple algebraic fact (we recall that a lightlike vector is a null non-zero vector) which proves that the null energy condition is sufficient to get the same result.

\begin{proposition} \label{pyrt}
Let  $(M,g)$ be a spacetime and let $n \in T_p M$ be a lightlike vector at $p\in M$ such that $R(n,n)=0$ where $R: T_pM\times T_pM\to \mathbb {R}$ is a symmetric bilinear form  at $p$. If $R(v,v)\ge 0$ for every null vector $v\in T_pM$ then  $R(n,\cdot)=b g(n,\cdot)$ for some constant $b\in \mathbb{R}$.
\end{proposition}

\begin{proof}
Let $N$ be a lightlike vector such that $g(N,n)=-1$, and let $u$ be an arbitrary vector such that $g(n,u)=g(N,u)=0$, hence spacelike.
The vector $v=n+xu+ x^2\frac{1}{2} g(u,u) N$ is lightlike and $R(v,v)=2 x R(n,u) + \ldots x^2+\ldots x^3+\ldots x^4$ is a polynomial of fourth-degree whose dominant term for small $x$ is linear. Since $x$ can be chosen arbitrarily small and of any desired sign, the condition $R(v,v)\ge 0$ implies $R(n,u)=0$. Since we already have $R(n,n)=0$ we conclude that for every $w\in \ker g(n,\cdot)$ we have $R(n,w)=0$.  Since $\ker R(n,\cdot)\subset \ker g(n,\cdot)$ there is $b\in \mathbb{R}$ such that $R(n,\cdot)=b g(n,\cdot)$ (in fact $b=R(n,N)/g(n,N)$).  $\hfill$ \qed
\end{proof}

\begin{corollary} \label{cooe}
Assume the null energy condition. Over any smooth totally geodesic null hypersurface $H$, if $n\in TH$ is lightlike, we have $Ric(n,\cdot)\vert_{TH}=0$ and hence $i_n \dd \omega=0$.
\end{corollary}

\begin{proof}
Letting $X=n$ in Eq.\ (\ref{nnrt}) we get $Ric(n,n)=0$, thus by the corollary  $Ric(n,\cdot)\vert_{TH}=0$, which, again due to  Eq.\ (\ref{nnrt}), is equivalent to $i_n \dd \omega=0$. $\hfill$ \qed
\end{proof}

\begin{theorem} \label{nphb}
All the results in \cite{minguzzi21} that use the dominant energy condition actually hold under the weaker null energy condition.
\end{theorem}

In particular, the dichotomy holds \cite[Cor.\ 2]{minguzzi21} and so we can make the distinction between degenerate and non-degenerate horizons also under this energy condition.

\begin{remark}
Previous derivations of the constancy of surface gravity for Killing horizons assumed the dominant energy condition \cite{bardeen73,chrusciel20} precisely to derive the equation  $Ric(n,\cdot)\propto g(n,\cdot)$, see \cite[Prop.\ 4.3.11]{chrusciel20}.
Of course, in this Killing context, and regardless of non-degeneracy and totally geodesic\footnote{The totally geodesic property can be derived for a null hypersurface $H$ admitting a tangent Killing field $k$. Indeed, if $X,Y: H \to TH$, $g(\nabla_X k,Y)=-g(k, \nabla_XY)$ and using the Killing property $g(\nabla_X k, Y)=-g(\nabla_Yk, X)$, thus summing $2 g(\nabla_X k, Y )=g(k, [Y,X])=0$ which implies $\nabla_X k \propto k$ on $H$.} assumptions, the energy requirement can be weakened to the null energy condition for the reasons stated in Prop.\ \ref{pyrt} as the equation $Ric(n,n)=0$ can also be derived, see \cite[Eq.\ (4.3.37)]{chrusciel20}. Although the derivation of surface gravity constancy in the Killing case is well-established, this potential improvement has apparently gone unnoticed.
\end{remark}

Let us come to the degenerate case. So far progress in the smooth category has been scarce.
However, in a recent work \cite{minguzzi24c} we pointed out the usefulness of Riemannian flow theory and the fact that its implications apply equally well to degenerate  and non-degenerate cases.  In fact, we were able to obtain a classification result for the possible structures of the horizon. In order to better introduce these findings, we need to be more precise about  the definitions.

For brevity, we adopt the following terminology:

\begin{definition}
A {\em horizon} is a  smooth connected, totally geodesic  null hypersurface. 
\end{definition}

In this work, a horizon will be denoted with \( H \), and \( n \) will denote a future-directed null vector field tangent to \( H \).

The achronality property of $H$ will not be important for what follows and so will not be imposed. Anyway,
 local achronality can be proved by using the existence of convex neighborhoods, and then the spacetime can be reduced to an open neighborhood of $H$ so as to get achronality if desired \cite[Prop.\ 4]{minguzzi21}.

While our primary focus is on compact horizons,
compactness will be explicitly stated where necessary.

The induced metric on the horizon is denoted \( g_T \) and termed the {\em transverse metric}. Abstracting away from the spacetime, a horizon can be characterized as a triple\footnote{For a more general abstract approach to the geometry of null hypersurfaces, see \cite{manzano24}. For the rigidity aspects of null geometry, see \cite{bekkara06}.} \( (H, g_T, \nabla) \), where:
\begin{itemize}
\item[-] \( g_T \) is a positive semi-definite symmetric bilinear form with a 1-dimensional oriented kernel \( \text{span}(n) \),
\item[-] \( n \) is a smooth vector field (future-directed) satisfying \( g_T(n, \cdot) = 0 \),
\item[-] \( \nabla \) is a symmetric affine connection compatible with \( g_T \) and \( n \) via \( \nabla g_T = 0 \) and \( \nabla n = n \otimes \omega \) for some 1-form \( \omega \).
\end{itemize}
These conditions imply that the flow of \( n \) preserves \( g_T \) (cf. \cite[Thm. 37]{kupeli87}, \cite[Lemma B1]{friedrich99}, \cite{moncrief08}, \cite[Lemma 7]{minguzzi21}), in the sense that for $X,Y\in TH$:
\begin{equation}
L_n g_T(X,Y) = 0.
\end{equation}
The kernel \( \text{span}(n) \), being 1-dimensional, is automatically integrable, with its integral curves forming the (unparametrized) generators of the horizon.

We now recall a separate concept.

A {\em foliation} \( (H, \mathcal{F}) \) on a manifold consists of a completely integrable distribution \( D \) of constant dimension. The integral leaves are denoted \( L \), and the collection of leaves is \( \mathcal{F} \). Two foliations \( (M, \mathcal{F}) \) and \( (M', \mathcal{F}') \) are {\em conjugate} if a diffeomorphism \( M \to M' \) maps leaves to leaves.

If a manifold \( H \) admits a positive semi-definite symmetric bilinear form \( g_T: H \to T^*H \otimes_H T^*H \) with a kernel
\[
\ker g_T = \{ X \in TH : g_T(X, \cdot) = 0 \}
\]
of constant dimension, and \( L_X g_T = 0 \) for all \( X \in \Gamma(\ker g_T) \), then \( \ker g_T \) is integrable and defines a foliation \( \mathcal{F} \) on \( H \) \cite[Prop. 3.2]{molino88}.

Such a foliation \( (H, \mathcal{F}) \) is called {\em Riemannian} if it arises from a metric \( g_T \) with the above properties \cite[Sec. 3.2]{molino88}. The metric $g_T$ need not be unique. When there is a privileged choice, the pair \( (H, g_T) \) itself might be referred to as a Riemannian foliation.

A {\em Riemannian flow} is a Riemannian foliation of oriented 1-dimensional leaves \cite{carriere84}. We can denote with \( n \)  a non-vanishing, positively oriented vector field tangent to the foliation, unique up to positive scaling. The Riemannian flow condition requires \( L_n g_T\vert_{TH\times TH} = 0 \) for some choice of \( n \), since the equation  \( L_{fn} g_T\vert_{TH\times TH} = 0 \) for any positive function $f$ can be deduced \cite{minguzzi24c}.
Not all oriented 1-dimensional foliations admit such a metric \cite{molino88}.

The following proposition connecting the notion of horizon with that of Riemannian flow is not hard to prove \cite{minguzzi24c}:

\begin{proposition} \label{bort}
Every horizon naturally defines a Riemannian flow with a prescribed \( g_T \), the leaves being the unparametrized generators, and $g_T$ being induced from the spacetime metric.
Conversely, any smooth null hypersurface \( H \) inducing a Riemannian flow structure \( (H, g_T) \) must be totally geodesic.
\end{proposition}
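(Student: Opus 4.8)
The plan is to reduce both implications to a single algebraic identity relating the Lie derivative of the transverse metric along a generator to the null second fundamental form of $H$. For $X,Y$ tangent to $H$ I set $B(X,Y):=g(\nabla_X n,Y)$, where $g$ is the ambient metric. Expanding the Lie derivative, using $\nabla g=0$ and the fact that $n$ is at once tangent and normal to the null hypersurface (so $g(n,X)=0$ for every $X\in TH$), I would first establish
\[
(L_n g_T)(X,Y)=g(\nabla_X n,Y)+g(\nabla_Y n,X)=B(X,Y)+B(Y,X),
\]
the terms involving $\nabla_n X$ and $\nabla_n Y$ cancelling. Here $g_T$ is the restriction of $g$ to $TH\times TH$, and since $n$ is tangent to $H$ its flow preserves $H$, so the intrinsic Lie derivative coincides with the restriction of the ambient one and the whole computation stays inside $H$.

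The key auxiliary fact is the symmetry of $B$. Writing $B(X,Y)=X\,g(n,Y)-g(n,\nabla_X Y)=-g(n,\nabla_X Y)$ and antisymmetrizing gives $B(X,Y)-B(Y,X)=-g(n,[X,Y])$; since $TH$ is integrable, $[X,Y]\in TH$ and hence $g(n,[X,Y])=0$. Thus $B$ is symmetric and the identity becomes $L_n g_T=2B$ on $TH\times TH$. This is the engine of the proposition: $H$ is totally geodesic precisely when $B\equiv0$, which happens precisely when $L_n g_T=0$. To match the definition of \eqref{pprt}, I would note that $B\equiv0$ means $\nabla_X n$ is $g$-orthogonal to all of $TH$, and that the $g$-orthogonal complement of the null hyperplane $TH$ is exactly $\operatorname{span}(n)$, so $\nabla_X n=\omega(X)\,n$ for some $1$-form $\omega$.

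For the forward direction I would start from \eqref{pprt}, $\nabla_X n=\omega(X)\,n$, which gives $B(X,Y)=\omega(X)\,g(n,Y)=0$ and hence $L_n g_T=0$. The kernel of $g_T$ is the one-dimensional span of $n$, automatically integrable, and its oriented integral curves are exactly the unparametrized generators; since $L_{fn}g_T=0$ for every positive $f$, the induced $g_T$ exhibits this oriented one-dimensional foliation as a Riemannian flow with $g_T$ as the prescribed transverse metric. For the converse I would run the identity backwards: a smooth null hypersurface inducing a Riemannian flow structure $(H,g_T)$ satisfies $L_n g_T=0$ for the generator $n$ of $\ker g_T$, so $2B=L_n g_T=0$, i.e. $B\equiv0$, which by the preceding remark is exactly the totally geodesic condition.

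I expect no genuine obstacle here, consistent with the statement being ``not hard to prove.'' The only points requiring care are the symmetry of $B$ (handled via integrability of $TH$) and the scaling ambiguity in the definition of a Riemannian flow. The latter is harmless: under $n'=e^{-f}n$ one has $B'=e^{-f}B$ because $g(n,\cdot)$ vanishes on $TH$, so the condition $B\equiv0$ is independent of the chosen representative of the generator, and the ``for some $n$'' in the Riemannian flow condition is unambiguous.
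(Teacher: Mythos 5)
Your proof is correct, and it is essentially the argument the paper relies on: the paper itself gives no proof of Proposition \ref{bort}, deferring to \cite{minguzzi24c} and to the cited lemmas (\cite[Thm.~37]{kupeli87}, \cite[Lemma B1]{friedrich99}, \cite[Lemma 7]{minguzzi21}), which rest on exactly your identity $L_n g_T = 2B$ on $TH\times TH$, with the symmetry of $B(X,Y)=-g(n,\nabla_X Y)$ coming from torsion-freeness and tangency of brackets, and with $B\equiv 0$ equivalent both to the totally geodesic condition and to Eq.~(\ref{pprt}) via $TH^\perp=\operatorname{span}(n)$. Your observations on the intrinsic versus ambient Lie derivative and on the rescaling invariance $B'=e^{-f}B$ correctly dispose of the only delicate points.
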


The orientation of the flow comes, of course, from the orientation of the spacetime.

In short, we can regard each compact horizon as a Riemannian flow with a canonically defined transverse metric $g_T$.

We recall that a Riemannian flow is {\em isometric} if there exists a Riemannian metric $h$ and a choice of $n$ on $H$ such that $L_n h=0$. In fact the pair $(n,h)$ can be chosen so that $h(n,n)=1$ without loss of generality \cite[Remark 3]{minguzzi24c}. Not all Riemannian flows are isometric. In \cite{minguzzi24c} we presented several equivalent characterizations of this important property.

The theory of Riemannian foliations began with the work by Reinhart \cite{reinhart59} (1959) and systematic treatments have appeared in specialized books, including those by Molino \cite{molino88} and Tondeur \cite{tondeur97}. Reinhart \cite{reinhart59b} also introduced the {\em basic cohomology}. For a Riemannian flow a $k$-form $\alpha$ is basic if $i_n \alpha=0$ and $i_n\dd \alpha=0$. In other words, basic forms arise as pullback forms of local quotient manifolds (the manifold can be covered by cylinders that admits a quotient under the flow).  Note that a 0-th form $f$ is basic if $n(f)=0$, i.e.\ it is constant over the leaves.

The exterior differential in basic cohomology coincides with the standard exterior: note that if $\alpha$ is basic then $\dd \alpha$ is basic. The cohomological space $H^k_b(H)$ is the quotient of the space of closed basic forms with that of exact basic forms.

\begin{remark} \label{crre}
If two basic 1-forms $\alpha$ and $\beta$ differ by an exact form $\beta-\alpha =\dd f$ then, as contracting with $n$ shows, $n(f)=0$, which proves $i_n \dd f$ and hence that the 0-th form $f$ is basic, as we have also $i_n f=0$. Thus $\alpha$ and $\beta$ differ also by the exterior differential of a basic form. In other words, we have the following result relating de Rham cohomology and  basic  cohomology:  $H^1(H)=0$ implies $H^1_b(H)=0$.
\end{remark}

An important result by Carri\`ere establishes that for a Riemannian flow the closure of the orbits are tori \cite{carriere84}. The maximal dimension for a closed tori in the manifold is denoted $k+1$.

By drawing from results by Molino and Carri\`ere \cite{carriere84} we obtained a classification for the  structures of a compact horizon in a 4-dimensional spacetime \cite[Thm.\ 14]{minguzzi24c}. On this Riemannian flow structure results conjugacy is always understood in the orbital sense, i.e.\ without parameter, meaning that orbits are sent to orbits but their parametrization might not be preserved.

\begin{theorem} \label{car}
Let $(M,g)$ be a  smooth, $4$-dimensional spacetime and  let $H$ be  an orientable compact horizon in $M$. There are the following mutually excluding possibilities:
\begin{enumerate}
\item[(i)] All orbits are closed, $k=0$. $H$ is a Seifert fibration over a Satake manifold. The  fibers are the orbits of the flow.
\item[(ii)] The flow has precisely two closed orbits and every other orbit densely fills a two-torus, $k=1$. If the closed orbits are removed the manifold is diffeomorphic to a product $(0,1) \times  T^2$ where the flow in each fiber $\{s\} \times T^2$ is a linear flow with dense orbits. The manifold $H$ is obtained by gluing two solid tori by their boundaries, on each solid torus the foliation being given through a suspension by an irrational rotation of the disk. Moreover, there are two subcases:
\begin{itemize}
\item[a)] $H$ is diffeomorphic to the lens space\footnote{It comprises the case $L(1,0)=S^3$ while the case $L(0,1)=S^1 \times S^2$ is included in (ii-b). See \cite{saveliev12} for this gluing construction. Note that a $S^3$ topology can also appear in case (i) (e.g.\ Hopf bundle).} $L(p,q)$,  and the flow is conjugate to that of \cite[Example 1]{minguzzi24c}.
\item[b)] $H$ is diffeomorphic to $S^1\times S^2$ and the flow is conjugate to the flow given by the suspension of an irrational rotation of $S^2$ (with respect to, say, the $z$-axis in the standard isometric embedding in $\mathbb{R}^3$).
\end{itemize}
\item[(iii)] Every orbit densely fills a two-torus, so no orbit is closed, no orbit is dense,  $k=1$. $H$ is a $T^2$-bundle over $S^1$, each  torus fiber being the closure of the orbits in it, and there are two possibilities:
\begin{itemize}
\item[a)] $H$ is diffeomorphic to $T^3$ with a flow conjugate to a linear flow on the torus.
\item[b)] $H$ is diffeomorphic to the hyperbolic fibration $T^3_A$, ($
\textrm{tr} A > 2$) and the flow is conjugate to one of the flows of \cite[Example 2]{minguzzi24c}.
\end{itemize}
\item[(iv)] All orbits are dense, $k=2$. $H$ is diffeomorphic to $T^3$ with a flow conjugate to a linear flow on the torus.
\end{enumerate}
Moreover, except case (iii-b), the flow is isometric.
\end{theorem}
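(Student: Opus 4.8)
The plan is to regard $H$, via Proposition \ref{bort}, as a compact orientable $3$-manifold carrying a Riemannian flow, and then to specialize the general structure theory of Molino \cite{molino88} together with the $3$-dimensional results of Carri\`ere \cite{carriere84} to this setting. By Carri\`ere's theorem the closure of every orbit is an embedded torus, and if $k+1$ denotes the maximal dimension attained by such a torus then, the leaves being $1$-dimensional and $\dim H=3$, one has $1\le k+1\le 3$, i.e.\ $k\in\{0,1,2\}$. This trichotomy is precisely the skeleton of the four cases, and the first step is simply to unwind what each value of $k$ forces on the orbit dynamics.

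By Molino's theory the orbit closures are the leaves of a singular Riemannian foliation $\bar{\mathcal F}$ whose leaf-closure space $W=H/\bar{\mathcal F}$ is a Satake manifold of dimension $3-(k+1)=2-k$. For $k=0$ every closure is a circle, so every orbit is closed, $W$ is a $2$-dimensional Satake manifold, and $H\to W$ is the associated Seifert fibration, giving (i). For $k=2$ the single closure is all of $H$, which is then a $3$-torus with a linear flow having dense orbits, giving (iv). The substantive value is $k=1$, where $W$ is $1$-dimensional and hence, by compactness and connectedness, either a circle or a closed interval. If $W\cong S^1$ there are no exceptional strata, all closures are $2$-tori, and $H$ fibers over $S^1$ with torus fibers, giving (iii); if $W$ is an interval, its two boundary points are exceptional closures which, being tori of dimension $<2$, must be circles, i.e.\ closed orbits, and Carri\`ere's local analysis shows there are exactly two of them, giving (ii).

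It remains to pin down the topology and the model flows within each $k=1$ branch. In case (ii) a tubular-neighbourhood analysis at each closed orbit identifies the transverse structure as the suspension of an irrational rotation of a disk, so $H$ is assembled by gluing two solid tori along their common boundary $T^2$; the isotopy class of the gluing produces the lens spaces $L(p,q)$ (ii-a) or $S^1\times S^2$ (ii-b), with the flow conjugate to the explicit models of \cite{minguzzi24c}. In case (iii), $H$ is a $T^2$-bundle over $S^1$ classified by the conjugacy class of its monodromy $A\in SL(2,\mathbb{Z})$: when $A$ is elliptic or parabolic ($|\mathrm{tr}\,A|\le 2$) the flow is conjugate to a linear flow on $T^3$ (iii-a), whereas a hyperbolic monodromy ($\mathrm{tr}\,A>2$) yields the fibration $T^3_A$ (iii-b).

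For the final isometry assertion, the guiding fact is that $L_n h=0$ makes the flow $\phi_t$ act by isometries of $(H,h)$, so the closure of $\{\phi_t\}$ in $\mathrm{Isom}(H,h)$ is a compact abelian group (a torus) and every orbit closure is a homogeneous torus carrying a translation flow; equivalently the transverse holonomy is valued in a compact group. In cases (i), (ii), (iii-a) and (iv) the closure tori are manifestly of this homogeneous type, and averaging a background metric over the compact holonomy produces the desired $h$. The genuine obstruction is the exclusion of (iii-b): there $A$ has real eigenvalues $\lambda,\lambda^{-1}$ off the unit circle, the transverse dynamics on each fiber torus is hyperbolic, and no compact group can contain this holonomy, so no flow-invariant metric exists. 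Reconciling this hyperbolic monodromy with the non-existence of an invariant metric is the most delicate step and is precisely where the argument leans most heavily on Carri\`ere's analysis \cite{carriere84}.
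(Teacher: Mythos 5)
The paper itself gives no proof of Theorem \ref{car}: it is imported verbatim from \cite{minguzzi24c}, whose argument follows exactly the Molino--Carri\`ere route you adopt. Your skeleton is therefore the intended one and is largely sound: orbit closures are tori carrying linear flows (Carri\`ere), hence $k\in\{0,1,2\}$; the closures form a singular Riemannian foliation whose leaf-closure space $W$ is a $(2-k)$-dimensional orbifold; for $k=1$, $W$ is a circle (case (iii)) or an interval (case (ii)). Two small omissions there: to conclude that the interval's endpoints are closed orbits you should note that a torus leaf closure is two-sided in the orientable $H$ and thus maps to an \emph{interior} point of $W$, so only circle closures can sit over $\partial W$; and in case (i) the isometric character requires knowing that a Riemannian flow with all orbits closed underlies a Seifert fibration admitting a circle action over which one averages (Molino/Epstein), not merely the Seifert structure.

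The genuine error is in your resolution of case (iii). The claim that ``when $A$ is elliptic or parabolic ($|\mathrm{tr}\,A|\le 2$) the flow is conjugate to a linear flow on $T^3$'' is false on two counts. First, such bundles are not $T^3$: the mapping torus of the parabolic $A=\left(\begin{smallmatrix}1&1\\0&1\end{smallmatrix}\right)$ is a nilmanifold with $H_1\cong\mathbb{Z}^2$, and elliptic monodromies give flat manifolds with torsion in $H_1$; none is diffeomorphic to $T^3$. Second, and decisively, these monodromies cannot occur: the gluing map must preserve the fiber foliation, which is linear of \emph{irrational} slope since each fiber is an orbit closure, so the induced action of $A$ on $H_1(T^2;\mathbb{R})$ must fix that irrational direction with a \emph{positive} eigenvalue (positivity from the flow orientation, available by orientability). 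An $SL(2,\mathbb{Z})$ matrix with a real irrational eigendirection is either $I$ or hyperbolic; a parabolic $A\neq I$ has rational eigendirection (contradicting density in the fiber), elliptic matrices and $-I$ have no real eigendirection, and $\mathrm{tr}\,A<-2$ reverses the flow orientation. This is precisely what forces the dichotomy $A=I$ (iii-a) versus $\mathrm{tr}\,A>2$ (iii-b); your version would funnel into (iii-a) bundles that neither are $T^3$ nor carry the flow at all. Relatedly, for the exclusion of an invariant metric in (iii-b) you ultimately defer to Carri\`ere; the argument you began can be closed without that appeal: if $L_nh=0$, the closure of $\{\phi_t\}$ in the compact group $\mathrm{Isom}(H,h)$ is a torus $G$ whose orbits $Gx=\overline{\{\phi_t(x)\}}$ are the fiber tori, so $H$ carries a $T^2$-action with all orbits the fibers; such an $H$ must be $T^3$ (the monodromy has to commute with the translations, hence is itself a translation), contradicting $H_1(T^3_A)=\mathbb{Z}\oplus\mathrm{coker}(A-I)$ having rank $1$ when $\mathrm{tr}\,A>2$.
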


This result applies independently of whether the horizon is degenerate or non-degenerate. Under the dominant (actually null, as we have shown previously in Thm. \ref{nphb}) energy condition non-degenerate horizons are  isometric \cite{petersen18b,bustamante21}\cite[Cor.\ 7]{minguzzi24c}, and so case (iii-b) is excluded. This also leads to  the classification obtained by Bustamante–Reiris for the non-degenerate case via Lie group methods (see also \cite{kroenke24} for related results in the analytic case).

For completeness we mention the classification for spacetime dimension 2 and 3.

\begin{theorem} \label{carf}
Let $(M,g)$ be a  smooth,  spacetime and  let $H$ be  an orientable compact horizon in $M$. In spacetime dimension 2 the manifold $H$ is diffeomorphic to $S^1$ while in spacetime dimension 3 the manifold $H$ is diffeomorphic to a torus with the flow being conjugate to (a) a constant flow  along the fibers of the second projection  $S^1\times S^1\to S^1$, (b) a linear flow over $T^2$ with dense orbits.
In all cases the flow is isometric.
\end{theorem}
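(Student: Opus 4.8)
The plan is to treat the statement as three separate claims---the diffeomorphism type of $H$, the conjugacy class of the flow, and the isometric property---and to dispose of spacetime dimension two immediately. There $H$ is a compact connected $1$-manifold, hence $H\cong S^1$; its single orbit is all of $S^1$, and taking any metric $h$ with $n$ its $h$-unit generator makes the flow a rotation, so $L_n h=0$ and the flow is isometric. The content is in spacetime dimension three, where $\dim H=2$.

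There $H$ is a compact, connected, orientable surface carrying the nowhere-vanishing generator $n$; by the Poincar\'e--Hopf theorem $\chi(H)=0$, and the only closed orientable surface with vanishing Euler characteristic is the torus, so $H\cong T^2$. To classify the flow I would exploit that the foliation has codimension one, so the transverse metric is $g_T=\alpha\otimes\alpha$ for a nowhere-zero $1$-form $\alpha$ with $i_n\alpha=0$. The Riemannian-flow condition $L_n g_T=0$ forces $L_n\alpha=0$, that is $i_n\dd\alpha=0$; since $\dd\alpha$ is a top-degree form on $T^2$ and $n\neq 0$, this gives $\dd\alpha=0$. Hence the generators are the kernel foliation of a nowhere-zero closed $1$-form, whose class $[\alpha]\in H^1(T^2;\mathbb{R})\setminus\{0\}$ fixes a slope: if the periods of $\alpha$ are commensurable the leaves are circles and the flow is conjugate to the constant product flow of case~(a), while if they are incommensurable every leaf is dense and the flow is conjugate to an irrational linear flow, case~(b). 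This dichotomy is the $\dim H=2$ shadow of Carri\`ere's theorem, with orbit-closure dimension $k+1\in\{1,2\}$. The isometric property then comes for free: picking a second closed $1$-form $\beta$ with $\beta(n)=1$ gives $L_n\beta=0$, so $h=\alpha\otimes\alpha+\beta\otimes\beta$ satisfies $L_n h=0$.

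The step I expect to be the genuine obstacle is the passage from ``closed defining $1$-form'' to ``conjugate to a linear foliation,'' i.e.\ producing the normalizing diffeomorphism rather than a mere topological semiconjugacy. This is where the rigidity of the Riemannian structure must be used: the transverse invariant measure $\alpha$ is preserved by the holonomy, so the first-return map on a closed transversal is conjugate to a rotation and Denjoy-type counterexamples are excluded; in the commensurable case the same input upgrades the circle foliation to an honest $S^1$-fibration with no exceptional fibers, which on the torus is necessarily the product.
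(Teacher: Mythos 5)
Your route is genuinely different from the paper's and, up to the last step, sound. The paper gets everything in dimension $3$ from Carri\`ere's theorem: orbit closures of a Riemannian flow are tori with the restricted flow conjugate to a dense linear flow, so a single non-closed orbit forces its closure (a $2$-torus inside the surface) to be all of $H$, giving case (b); the all-closed case is handled by gluing the two boundary circles of a cylinder $S^1\times[0,1]$, and the isometric property is the one-line observation that both model flows preserve the flat metric. You instead argue from scratch: Poincar\'e--Hopf for $H\cong T^2$, then the codimension-one structure $g_T=\alpha\otimes\alpha$ (note this global square root needs the normal bundle of the foliation to be orientable --- true here since $H$ is orientable and the line field is oriented by $n$, but worth saying), $L_n g_T=0\Rightarrow L_n\alpha=0\Rightarrow \dd\alpha=0$, and then the transverse invariant measure $\alpha$ to conjugate the first-return map to a rotation, excluding Denjoy phenomena. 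That is a more elementary, self-contained proof of the conjugacy, at the cost of redoing in dimension two what Carri\`ere provides in general; it buys independence from his classification, which is a real merit.

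The genuine flaw is the claim that the isometric property ``comes for free'' by picking a closed $1$-form $\beta$ with $\beta(n)=1$: such a $\beta$ need not exist \emph{for the given} $n$. In case (a), with $n=v\,\p_y$, writing $\beta=a\,\dd x+b\,\dd y$ forces $b=1/v$, and $\dd\beta=0$ is solvable for a periodic $a$ only if the period function $T(x)=\int_0^1 \dd y/v$ is constant in $x$, which generally fails. In case (b), after conjugation $n=u\,e$ with $e$ a constant irrational-slope field and $u>0$ smooth, the requirement $\beta(e)=1/u$ amounts to the cohomological equation $e(g)=1/u-c\cdot e$; for Liouville slopes there are smooth zero-mean right-hand sides with no smooth (even continuous) solution, and every positive $u$ does arise from a legitimate Riemannian flow since $L_{ue}(\alpha\otimes\alpha)=0$ for the constant $\alpha$ annihilating $e$. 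The repair is immediate and is exactly what the paper does: the definition of an isometric flow permits re-choosing $n$ within its positive ray, and you have already established conjugacy to the model flows, so pull back the flat metric and the linear (or unit-speed) generator through the conjugating diffeomorphism; alternatively, in case (a) rescale $n$ by a function constant on orbits so that all periods equal $1$, after which your $\beta$ does exist. With that correction your argument is complete.
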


\begin{proof}
The statement for dimension 2 is clear as $H$ has dimension 1. In spacetime dimension 3 the horizon has dimension 2 and by Carri\`ere result, the leaves of a Riemannian foliation have closures which are tori and the Riemannian flow restricted to it is conjugate to a linear flow on the torus with dense orbits \cite[Thm.\ V.A.1]{carriere81}\cite{carriere84}. Thus either there is a non-closed orbit, which implies that its closure and hence $H$ is a torus with the flow as in (b), or all orbits are closed. In the latter case $H$ is obtained by gluing the two $S^1$ sides of a cylinder $S^1\times [0,1]$ which can only be done in one way due to the orientation of the generators. This shows that we are in case (a). Both flows preserve the standard flat metric of the torus, hence the flows are isometric. $\hfill$ \qed
\end{proof}

Despite these advances, the degenerate case remained unsolved as a proof of the constancy of surface gravity (necessarily $\kappa=0$) was lacking. This paper solves this problem by showing that

\begin{theorem} \label{jdot}
Let  $(M,g)$ be a smooth spacetime (of any dimension) and assume the null energy condition. Let $H$ be a  compact degenerate horizon, then there exists  a smooth future-directed lightlike tangent vector field $n$ of zero surface gravity, namely geodesic: $\nabla_n n=0$.
\end{theorem}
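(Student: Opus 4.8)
The plan is to reduce the statement to the solvability of a cohomological (transport) equation, and then to exploit the special geometry forced by the dominant energy condition together with the Riemannian flow structure of the horizon. By the rescaling laws \eqref{ren1}--\eqref{ren2}, replacing $n$ by $n'=e^{-f}n$ turns the structure form into $\omega'=\omega-\dd f$ and the surface gravity into $\kappa'=e^{-f}(\kappa-n(f))$, so $n'$ is geodesic if and only if $f$ solves $n(f)=\kappa$, i.e.\ the admissible family $\{\omega-\dd f\}$ contains a form annihilated by $i_n$. The first input is the energy condition. Since $H$ is totally geodesic its null expansion and shear vanish, so the Raychaudhuri equation gives $\mathrm{Ric}(n,n)=0$ on $H$, and the dominant energy condition upgrades this to $\mathrm{Ric}(n,X)=0$ for all $X\in TH$. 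Differentiating $\nabla_X n=\omega(X)n$ yields $R(X,Y)n=\dd\omega(X,Y)\,n$ for $X,Y\in TH$, and contracting this identity (using that $R(\cdot,\cdot)n$ is proportional to $n$) one finds $\mathrm{Ric}(n,X)=\dd\omega(n,X)$. Hence the energy condition is exactly equivalent to $i_n\dd\omega=0$, so $\dd\omega$ is a basic $2$-form and $L_n\omega=\dd\kappa$.

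The heart of the proof is to show that in the degenerate case the Riemannian flow is \emph{isometric}. By Carri\`ere's theorem the orbit closures are tori on which the flow is linear and the transverse holonomy is isometric; what is missing is the longitudinal equicontinuity that separates isometric flows from the non-isometric ones (such as the $\mathfrak{aff}$-type flow of case (iii-b)). Here degeneracy enters decisively. Writing $S_t=\int_0^t\kappa\circ\phi_s\,\dd s$ for the flow $\phi_t$ of $n$, the relation $L_n\omega=\dd\kappa$ integrates to $\phi_t^*\omega=\omega+\dd S_t$, so the leaf-direction growth of $\dd\phi_t$ is governed entirely by $S_t$; on the other hand the affine length of the generator through $p$ is $\int e^{S_t(p)}\,\dd t$, so that completeness of every generator (which holds in the degenerate case by the dichotomy) constrains $S_t$. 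I expect that, via the compactness of $H$ and recurrence, this forces $S_t$ to stay uniformly bounded, which together with the transversely isometric holonomy gives uniformly bounded $\dd\phi_t$, i.e.\ equicontinuity, hence an isometric flow whose closure in $\mathrm{Isom}(H,h)$ is a torus $\mathbb{T}^{r}$ acting by isometries. \textbf{This is the main obstacle}: ruling out the non-isometric structural algebras using only completeness, uniformly over $H$ and in arbitrary dimension, is where the ergodic-theoretic and Riemannian-flow arguments must be combined most carefully, and it is the step that truly distinguishes the degenerate case.

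With the flow isometric, I would produce the geodesic field by averaging. Time-averaging the structure form, $\omega_T=\tfrac1T\int_0^T\phi_t^*\omega\,\dd t=\omega+\dd F_T$ with $F_T=\tfrac1T\int_0^T S_t\,\dd t$, stays inside the admissible family $\{\omega-\dd f\}$; equicontinuity provides uniform $C^\infty$ bounds on $\phi_t^*\omega$, so a subsequence converges to a smooth $\phi_t$-invariant form $\omega_\infty$, which differs from $\omega$ by an exact form $\dd F$ (the space of exact forms being $C^\infty$-closed); equivalently one may average $\omega$ over the compact group $\mathbb{T}^{r}$ and invoke basic Hodge theory. Since $L_n\omega_\infty=0$ and $i_n\dd\omega_\infty=i_n\dd\omega=0$, we get $\dd(i_n\omega_\infty)=0$, so $i_n\omega_\infty=:c$ is a constant on the connected manifold $H$. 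Setting $f=-F$ and $n''=e^{-f}n$, the surface gravity becomes $\kappa''=c\,e^{-f}$, which has a definite sign bounded away from zero by compactness of $H$. If $c\neq 0$, then the affine length $\int e^{\int_0^\tau\kappa''\,\dd s}\,\dd\tau$ of every generator diverges on one side and converges on the other (it is incomplete to the future if $c<0$ and to the past if $c>0$), contradicting the existence of a complete generator. Hence $c=0$, $n''$ is geodesic, and the theorem follows.
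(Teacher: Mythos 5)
Your reduction to the cohomological equation, your derivation of $i_n\dd\omega=0$ from the dominant energy condition, and your endgame in the isometric case (averaging $\omega$ over the torus closure of the flow in the isometry group to get an invariant form, concluding $i_n\omega_\infty=c$ constant, then killing $c$ by completeness of a generator) are all sound; the averaging argument is a legitimate variant of the paper's Hodge-theoretic Theorem~\ref{mxrt}, and the final step matches Theorems~\ref{Birk app} and \ref{mcrt}. But the step you yourself flag as the main obstacle --- \emph{degenerate implies the flow is isometric} --- is a genuine gap, and the proposed mechanism for closing it does not work. Degeneracy, via the dichotomy and the ribbon/ergodic argument, yields only that the asymptotic averages $\lim_{t\to\infty}\frac1t S_t$ vanish along every orbit (this is exactly the content of Theorem~\ref{Birk app}); it does \emph{not} yield uniform boundedness of $S_t=\int_0^t\kappa\circ\phi_s\,\dd s$. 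Indeed, by Gottschalk--Hedlund-type reasoning, uniform boundedness of $S_t$ is essentially equivalent to solvability of the cohomological equation $n(f)=\kappa$ with continuous $f$, so at this point you are assuming the crux in disguise. Moreover, even granting bounded $S_t$, isometry of the Riemannian flow is not governed by $\omega$ or $\kappa$ at all: it is governed by the \'Alvarez class $[\mu_b]$ of the mean curvature form of the foliation, a structure independent of the horizon's connection data, and the non-isometric flow on the hyperbolic torus $T^3_A$ (case (iii-b) of Theorem~\ref{car}) fails to be isometric for reasons of pure foliation geometry. The paper does \emph{not} exclude this structure for degenerate horizons; on the contrary, it treats it explicitly in Theorem~\ref{mcrf} by entirely different means (Diophantine Fourier analysis on torus bundles following Dehghan-Nezhad--El Kacimi Alaoui, or the \'Alvarez L\'opez--Dom\'{\i}nguez tenseness theorems).

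The paper's actual resolution in arbitrary dimension (Theorem~\ref{Birt}) sidesteps the global isometry question altogether: it covers $H$ by Carri\`ere neighborhoods, which are suspensions by an irrational translation times a rotation and hence carry \emph{locally} isometric flows, embeds each in a compact suspension $N$ by compactifying the disk factor into a deformed sphere, applies the isometric-case result (Theorem~\ref{mxrt}) on $N$ to get $n'(f)=\omega'(n')-c$, uses the vanishing time averages to force $c=0$, and patches the local solutions with a partition of unity constant on orbits. If you want to salvage your outline, the viable route is this localization, not a global isometry claim: as written, your proof establishes the result only under the additional hypothesis that the flow is isometric, i.e.\ it reproves Theorem~\ref{mcrt} but not Theorem~\ref{jdot}.
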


Joining with the results of \cite{reiris21,minguzzi21}

\begin{corollary}
Let  $(M,g)$ be a smooth spacetime (of any dimension) and assume the null energy condition. Let $H$ be a  compact horizon, then there exists  a smooth future-directed lightlike tangent vector field $n$ of constant surface gravity: $\nabla_nn=\kappa n$, with $\kappa =cnst$.
\end{corollary}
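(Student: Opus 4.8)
The plan is to first dispose of the corollary and then concentrate on its substantive ingredient, the degenerate case of Theorem~\ref{jdot}. By the dichotomy established in the smooth category under the dominant energy condition, a compact horizon is either non-degenerate (some generator incomplete) or degenerate (some, hence by the dichotomy every, generator complete). In the non-degenerate case the constancy of $\kappa$ (with $\kappa\neq0$) is already known \cite{reiris21,minguzzi21}, so the corollary follows at once once Theorem~\ref{jdot} supplies a geodesic generator ($\kappa=0$) in the degenerate case. Everything thus reduces to producing, on a compact degenerate horizon, a rescaling $n'=\eu^{-f}n$ that is geodesic. Fixing a reference future-directed generator $n$ with surface gravity $\kappa=\omega(n)$, the laws \eqref{ren1}--\eqref{ren2} show that $n'=\eu^{-f}n$ is geodesic precisely when $\kappa'=\eu^{-f}(\kappa-n(f))=0$, i.e.\ when $f$ solves the \emph{cohomological equation} $n(f)=\kappa$. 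This is the equation announced in the abstract, and solving it for smooth $f$ is the whole content of Theorem~\ref{jdot}. I would attack it in two stages: first identify and kill the dynamical obstructions to solvability, then upgrade the resulting weak solution to a smooth one.

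For any flow, $n(f)=\kappa$ forces $\int_H\kappa\,\dd\mu=0$ for every $n$-invariant probability measure $\mu$, since $\int_H n(f)\,\dd\mu=0$. Note first that the totally geodesic condition gives vanishing expansion, so $L_n g_T=0$ furnishes a canonical $n$-invariant transverse volume and hence at least one invariant measure. To see that \emph{all} these obstructions vanish in the degenerate case I would compare the flow parameter $t$ of $n$ with the affine parameter $s$ of the generators: writing $n=\psi\,\p_s$ and using $\nabla_n n=\kappa n$ gives $\dd\psi/\dd t=\kappa\psi$, so $\psi(t)=\psi(0)\exp\!\big(\int_0^t\kappa\big)$ and the affine length to the future equals $\psi(0)\int_0^\infty \exp\!\big(\int_0^{t'}\kappa\big)\dd t'$. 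By Birkhoff's ergodic theorem the exponent grows like $t\!\int_H\kappa\,\dd\mu$ along a $\mu$-generic orbit, so a nonzero average would make that orbit future- or past-incomplete. Since degeneracy together with the dichotomy makes every generator complete, $\int_H\kappa\,\dd\mu=0$ for every ergodic, hence by ergodic decomposition for every invariant, measure $\mu$. Equivalently, $\kappa$ is $L^2(\mu)$-orthogonal to the $n$-invariant functions, i.e.\ to $\ker n$. This is where degeneracy (and, through the dichotomy, the dominant energy condition) enters the argument.

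The remaining, and hardest, step is to convert this orthogonality into a genuinely smooth $f$. As a skew-adjoint operator on $L^2(\mu)$, $n$ satisfies $\kappa\in\overline{\mathrm{im}\,n}$, but $\mathrm{im}\,n$ need not be closed, and naive Fourier inversion along a dense orbit runs into small divisors, so $L^2$- or merely continuous solvability of Gottschalk--Hedlund type is not enough. Here I would invoke Riemannian flow theory and transverse Hodge theory. Carri\`ere's theorem \cite{carriere84} makes the orbit closures tori, Molino's structure theory \cite{molino88} (refined, where available, by the classification of Theorem~\ref{car}, which under the dominant energy condition presents the flow as isometric) provides a bundle-like transverse geometry, and the curvature identity $R(X,Y)n=\dd\omega(X,Y)\,n$ together with the dominant energy condition controls $\dd\omega$ transversally. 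On such a Riemannian flow the basic Laplacian is transversally elliptic (El Kacimi, Kamber--Tondeur), which yields a smooth basic Hodge decomposition on the orbit-closure space and lets one invert $n$ on the orthogonal complement of $\ker n$ with smooth output.

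The technical heart of the proof is therefore this last smoothing step: the ergodic computation of the previous paragraph only guarantees that the relevant class pairs trivially with every invariant measure, i.e.\ that the basic obstruction vanishes, and it is the transverse-elliptic (basic) Hodge theory of the Riemannian flow—available precisely because the horizon induces a Riemannian flow by Proposition~\ref{bort}—that finally secures a smooth solution $f$ and hence a geodesic generator $n'=\eu^{-f}n$.
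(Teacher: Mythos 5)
Your reduction of the corollary is exactly the paper's: the dichotomy plus the known non-degenerate results \cite{reiris21,minguzzi21} reduce everything to Theorem~\ref{jdot}, and your first two paragraphs (the cohomological equation $n(f)=\kappa$ via \eqref{ren2}, and the vanishing of the averaged surface gravity from Birkhoff's theorem together with completeness of the generators) track the paper's Theorems~\ref{Birk app} and \ref{Birkm} faithfully, modulo the paper's extra ribbon-argument work showing the limit vanishes along \emph{every} orbit rather than merely $\mu$-almost every one --- a strengthening the paper genuinely needs later.

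However, your smoothing step contains a genuine gap, and it is precisely where the paper's real work lies. You propose to invert $n$ using the transversally elliptic \emph{basic} Laplacian and a basic Hodge decomposition. But basic Hodge theory lives on forms annihilated by $i_n$ and $i_n\dd$, i.e.\ objects constant along the flow; the cohomological equation $n(f)=\kappa$ is the orthogonal problem, a derivative \emph{along} the leaves, where transverse ellipticity gives no regularity whatsoever and where the small divisors you yourself flag actually live. A transversally elliptic operator cannot ``invert $n$ on the orthogonal complement of $\ker n$ with smooth output''; the leafwise (foliated) cohomology is in general infinite-dimensional and non-Hausdorff. Moreover, your appeal to Theorem~\ref{car} as ``presenting the flow as isometric'' under the dominant energy condition is wrong in the degenerate setting: only \emph{non-degenerate} horizons are known to be isometric, the degenerate case in dimension $4$ still admits the non-isometric hyperbolic torus (iii-b), and in higher dimensions no such classification exists at all. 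The paper's mechanism is different on both counts: in the isometric case it uses the \emph{ordinary} Hodge decomposition $\omega=\alpha+\beta+\dd f$ on the compact manifold, exploiting that $n$ is Killing to show $\alpha(n)$ and $\beta(n)$ are constants (Theorem~\ref{mxrt}), so $n(f)=\kappa-c$ with $c$ killed by the ergodic zero average; and in the general (possibly non-isometric) case it localizes to Carri\`ere neighborhoods of orbit closures, compactifies the disk factor to obtain compact suspensions on which the flow \emph{is} isometric, applies Theorem~\ref{mxrt} there, kills the constant $c$ on each neighborhood using the orbit-wise zero average (this is why the all-orbits version matters), and patches the local solutions with a partition of unity constant along orbits. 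Without the Carri\`ere-neighborhood compactification, or some substitute handling non-isometric flows, your argument does not close.
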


The proof is based on elements of ergodic theory, Hodge theory and Riemannian flow theory.

We first provide the proof for the isometric case (Thm.\ \ref{mcrt}). In four dimension, there is only one exceptional non-isometric structure which  we deal with using some special results. In fact we find two ways of dealing with this special case. Ultimately, then, we prove the theorem for 4-dimensional spacetimes and lower dimensions (Thm.\ \ref{mcrf}).

The proof valid for generic dimensions is given later. Actually, we provide two proofs. One uses the notion of Carri\`ere neighborhood while the other use basic cohomology. We note that in previous literature most progress was limited to specific dimensions.

We end this introduction commenting on the problem of the existence of compact degenerate horizons under  a vacuum assumption. As noted in \cite[Remark 5]{minguzzi24c} they exist and can also be isometric.  Isenberg and Moncrief \cite{moncrief20} conjectured that they do not exist if additionally the spacetime is analytic and they are Cauchy horizons \cite[Sec.\ 6.5]{hawking73} \cite[Def.\ 3.14]{minguzzi18b}, $H=H^+(S)$. Here, they seem to require that the hypersurface $S$ is spacelike and achronal, which is a somewhat strong form of Cauchy horizon condition with respect to the conventions in \cite{minguzzi18b}, according to which, taking as spacetime a neighborhood of $H$ one would always get $H=H^+(H)$, see e.g.\  \cite[Cor.\ 3.18]{minguzzi18b}.  In any case, in our study we shall  impose neither analyticity of the metric, nor the Cauchy condition, nor the vacuum assumption, so we do not have reasons to suspect that compact horizons belong to a small family, although results obtained in this work could be used precisely to narrow it down.

The paper is structured as follows. In Section \ref{cmmq}, we provide background on the cohomological equation and explain its role in studying degenerate horizons. In Section \ref{ribbr}, we show that for every horizon and choice of vector field \( n \), one can associate an important constant related to various averaged quantities, and we prove that this constant vanishes for degenerate horizons.

In Section \ref{hodrt}, we introduce Hodge theory and demonstrate its utility in solving the isometric case. The proof of our main theorem for 4-dimensional spacetimes is also given here (Theorem \ref{mcrf}), where we address the special non-isometric case (iii-b) via additional technical arguments.

Section \ref{nnnt} presents the proof for the general case (which also yields an alternative proof for the 4-dimensional setting). The key idea is to cover the horizon with {\em Carri\`ere neighborhoods}—non-compact suspensions that are isometric Riemannian flows. To handle non-compactness, we embed these neighborhoods into larger compact suspensions, allowing us to apply results from the isometric case for Riemannian flows.

Section \ref{mnlc} presents a second proof of the general case. This time the non-isometric  case is dealt with using some results from basic cohomology theory, specifically the Gysin sequence.

Finally, in Section \ref{moetr}, we prove a stronger result for horizons \( H \) that are tori with dense orbits: the vector field \( n \) can be chosen such that \( \omega \) is closed, thus defining an element of the basic cohomology \( H^1_b(H) \).

As for conventions and terminology, we
 adopt the notations of \cite{minguzzi14d,minguzzi18b}. A {\em spacetime} \((M,g)\) is a paracompact, time-oriented Lorentzian manifold of dimension \(n+1 \geq 2\) with signature \((-,+,\dots,+)\). The manifolds $M$ and the metric \(g\) are assumed to be smooth. We do not try to work with metrics of regularity \(C^r\), say $r\ge 3$, because most of the theory of Riemannian flows and Hodge theory on which results we rely has been developed in the smooth category.
 Still we believe that  our main result should hold under much weaker regularity conditions.

\section{The cohomological equation} \label{cmmq}

A non-vanishing vector field $n$ over a manifold $H$ induces a flow $\varphi: \mathbb{R} \times H\to H$ and a foliation $\mathcal{F}$. The distribution $D$ is provided at every point $p\in H$ by $D(p):=\textrm{span}(n)\subset T_pH$. The sections of the bundle $\Lambda^r(D^*)\to H$ form the space $\Omega^r_\mathcal{F}(H)$ of foliated forms of degree $r$.

The foliated cohomology can be defined for more general foliated manifolds but is particularly simple to describe in this flow setting \cite{dehghan07}. Let $\chi$ be a 1-form whose kernel is supplementary at each point to the tangent space $D:=\textrm{span}(n)$ of the leaves and $\chi(n)=1$. Then
\begin{equation}
\Omega^r_\mathcal{F}(H) =
\begin{cases}
   C^\infty(H) & \textrm{if} \ r=0 \\
     C^\infty(H)\otimes \chi\vert_D & \textrm{if} \ r=1 \\
    0 & \textrm{if} \ r\ge 2
\end{cases}
\end{equation}
Since we are interested on the action of these forms on vectors belonging to $D$, the choice of the kernel of $\chi$ is irrelevant. We have the foliated complex
\[
0 \to \Omega^0_\mathcal{F}(H) \xrightarrow{d_\mathcal{F}} \Omega^1_\mathcal{F}(H)\to 0
\]
where $d_\mathcal{F} f = n(f) \chi\vert_D$. Observe that rescaling $n$ with a positive function, rescales also $\chi$ and so $d_\mathcal{F}$ does not depend on the choice of $n$. We are interested in the foliated cohomology group $H^1_{\mathcal{F}}(H)=\Omega^1_\mathcal{F}(H)/d_\mathcal{F} \Omega^0_\mathcal{F}(H)$. An element $\kappa \chi\vert_D \in  \Omega^1_\mathcal{F}(H)$, where $\kappa$ is a function, is exact if there is a function $f$ such that
\begin{equation} \label{scpt}
n(f)=\kappa.
\end{equation}
This is the {\em cohomological equation}. Roughly, $H^1_{\mathcal{F}}(H)$ is the space of non-tri\-vial obstructions to the existence
of a smooth solution $f$ of (\ref{scpt}) for any given smooth $\kappa$.
Observe that for $H$ compact if $\kappa\ne 0$ is a constant then $\kappa \chi\vert_D$ is not exact, indeed  as every orbit accumulates to some point $p$, if it were $n(f)=const\ne 0$ we would get a contradiction with the continuity of $f$ at $p$. This shows that $\textrm{dim} H^1_\mathcal{F}(H)\ge 1$.

\begin{remark} \label{ccor}
In our compact horizon framework let us start by choosing any future-directed lightlike tangent field $n$.
Inspection of Eq.\ (\ref{ren2}) shows that a solution to  the cohomological equation provides a field $n'=e^{-f} n$ of zero surface gravity: $\kappa'=0$.
\end{remark}

The literature on the cohomological equation is extensive but results are focused on proving the existence of $f$ for {\em any} $\kappa$ up to some normalization conditions to be imposed on the integral of $\kappa$. In general, the problem can be solved under restrictive conditions on the flow, in fact Katok's conjecture (now proved for  $H$ of dimension 3 \cite{matsumoto09,kocsard09,forni08}, which is the physical 4-dimensional spacetime case) states that on a connected orientable compact manifold $H$ the equation $n(f)=\kappa - c(\kappa)$ can  be solved for any function $\kappa$ and for some constant $c(\kappa)$, only if the flow is conjugate to a Diophantine linear flow on the torus.\footnote{Here {\em Diophantine} is what is also commonly referred as {\em badly approximable by rationals}. Also {\em conjugacy} here is understood with parameter: A flow $\phi^t: M \to M$ is smoothly conjugate to a flow $\psi^t: N \to N$ if there exists a smooth diffeomorphism $h: M \to N$ such that:
  $ h(\phi^t(x)) = \psi^t(h(x)) \quad \text{for all } x \in M \text{ and all } t \in \mathbb{R}$. Observe that the results by Carri\`ere \cite{carriere84} used elsewhere in this work use orbital conjugacy (i.e.\ parametrization of flow lines is not preserved).}

Instead, in our spacetime problem we need to solve the cohomological equation for just one function $\kappa$. We do not expect severe constraints on the topology of $H$ or on the flow but we have two simplifications on our side. On the one hand, we already know the flow to be Riemannian and, on the other hand, the function $\kappa$ is not any function as it reads $\kappa =\omega(n)$ i.e.\ it is derived from a 1-form $\omega$ with special properties. We shall need to use these facts. In particular, we shall need
that under the null energy condition, on the horizon $H$, $\dd \omega(n,\cdot)=0$ (Cor.\ \ref{cooe})
%
%

%
%

As we shall see, previous studies on the cohomological equation \cite{dehghan07} will really be used just in dealing with the exceptional non-isometric case $T^3_A$ (see case (iii-b) of Thm.\ \ref{car}) as it involves a Diophantine linear flow on the torus.

\section{Consequences of ergodic theory and ribbon argument} \label{ribbr}

We shall need the following theorem which we phrase in the context of Riemannian flow geometry. Indeed, the conclusion only depends on the properties of $\omega$ and on the existence of a transverse metric $g_T$, not on their spacetime origin in our specific problem. It is particularly strong, the proof joining the effectiveness of the ribbon argument with ergodic theory.


We start with a preliminary Lemma  based on the ribbon argument construction discussed in \cite[Sec.\ 4 and App.]{minguzzi21}.

\begin{lemma} \label{metq}
Let $H$ be a compact connected manifold endowed with a Riemannian flow generated by a smooth vector field $n$, and let $\omega$ be a 1-form on $H$ which satisfies $i_n\dd \omega=0$.  Let $g_T$ be a transverse metric, i.e.\ $L_n g_T=0$, $g_T(n, \cdot)=0$,  signature $(0,+,\cdots,+)$. If for one integral curve  $x: \mathbb{R}\to H$  of $n$ with starting point $p=x(0)$, the limit
\begin{equation}
    \lim_{t\to \infty} \frac{1}{t} \int_0^t \omega(\dot x) \dd t,
    \end{equation}
exists and vanishes, then the same holds for every integral curve regardless of the starting point. If it exists for two choices of starting points then the corresponding limits have the same sign (if non-zero).
 A similar past version also holds.
\end{lemma}
It can be observed that the argument of the limit is bounded as $\omega(n)$ is a continuous function on  $H$, hence bounded. Thus, if the limit exists, it is finite.

\begin{proof}
Suppose that the limit is zero for the starting point $p$.
Let $n^*$ be a smooth 1-form field such that  $n^*(n)=1$. A curve with tangent in $\ker n^*$ is said to be $n^*$-horizontal. The limit  $ \lim_{t\to \infty} \frac{1}{t} \int_0^t \omega(\dot x) \dd t$ does not really depend on the starting point chosen along the curve, so for any sufficiently close integral curve $x'$ we can redefine the starting point $x'(0)$ of the parametrization so that there is a horizontal curve $\sigma_0$ connecting $x(0)$ to $x'(0)$. Locally this can be understood working in a cylinder neighborhood \cite[Fig.\ 2]{minguzzi21}. The condition $L_ng_T=0$ ensures that $g_T$ passes to the local quotient manifold to a quotient metric and so we can connect the two (longitudinal) integral curves $x, x'$ with the horizontal lift of a local quotient geodesic.
 It is possible to construct a ribbon from the curves $x\vert_{[0,t]}$,  $x'\vert_{[0,t'(t)]}$,  and two horizontal curves $\sigma_0$ and $\sigma_1$, precisely as done in \cite[Sec.\ 4]{minguzzi21}, and application of Stokes's theorem  implies that \cite{minguzzi21}
\begin{equation} \label{cbti}
\vert \int_{x([0,t])} \omega-  \int_{x'([0,t'])} \omega \vert \le 2B,
\end{equation}
where $B$ is a positive number.
The bound $B$ on the right-hand side does not depend on how elongated is the ribbon (see \cite[Eq.\ (18)-(19)]{minguzzi21}.  In \cite[App.]{minguzzi21} it is detailed how $t\to +\infty$ implies $t'\to \infty$, and conversely. In fact, on the same appendix (footnote 6) it is shown that there is a constant $C>0$ such that $t e^{-C}\le t'\le t e^C$. Dividing the previous equation by $t'$ and taking the limit we get, as $t/t'$ remains bounded, $\lim_{t'\to \infty} \frac{1}{t'} \int_{x'([0,t'])} \omega=0$. This shows that the set $O$ of orbit starting  points for which the integral exists and is equal to zero is open. In fact it is also closed, for if  $q \in \p O$ then for every neighborhood $U\ni q$, we can find some starting point $p\in O$ for which the integral exists and vanishes. Then, reapplying the ribbon argument, we get that the same is true for the orbit from $q$. By the connectedness of $H$ we conclude $O=H$.

The last statement is obtained taking one of the two inequalities implied by Eq.\ (\ref{cbti}), dividing by $t$, and using the boundedness $\log(t/t')$. $\hfill$ \qed
 \end{proof}

\begin{theorem}
\label{Birkm}

Let $H$ be a compact connected manifold endowed with a Riemannian flow generated by a smooth vector field $n$, and let $\omega$ be a 1-form on $H$ which satisfies $i_n\dd \omega=0$.  Let $g_T$ be a transverse metric, i.e.\ $L_n g_T=0$, $g_T(n, \cdot)=0$,  signature $(0,+,\cdots,+)$. Let $\alpha$ be a 1-form such that $\alpha(n)=1$ and let  $\tilde g=g_T+\alpha \otimes \alpha$. Let $\dd \tilde g$ denote its canonical volume form.

The following properties are equivalent:
\begin{itemize}
\item[(a)] for one  integral curve $x: \mathbb{R}\to H$  of $n$ (and hence for all of them) the following limit exists and is zero
    \begin{equation}
    \lim_{t\to \infty} \frac{1}{t} \int_0^t \omega(\dot x) \dd t,
    \end{equation}
\item[(b)] for one  integral curve $x: \mathbb{R}\to H$  of $n$ (and hence for all of them) the following limit exists and is zero
    \begin{equation}
   \lim_{t\to \infty} \frac{1}{t} \int_{-t}^0 \omega(\dot x) \dd t,
    \end{equation}
\item[(c)]
\begin{equation}
\int_H \omega(n) \dd \tilde g=0,
\end{equation}
\item[(d)] for every compact submanifold $N$ left invariant by the flow (i.e.\ saturated)
\begin{equation}
\int_N \omega(n) \dd \bar g=0.
\end{equation}
where $\bar g$ is the metric induced by  $\tilde{g}$ on the submanifold and $d \bar g$ is the canonical volume form associated to $\bar g$.
\end{itemize}

%
%
%
%
\end{theorem}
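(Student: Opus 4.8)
The plan is to collapse all four statements to a single number, the common value of the Birkhoff average, and then to show that this number is simultaneously computed by the time averages (a), (b), by the global integral (c), and by every invariant-submanifold integral (d). The two structural pillars are ergodic theory for a flow-invariant measure and Carri\`ere's description of orbit closures.

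\emph{The invariant measure.} First I would observe that the canonical volume $\dd\tilde g$ is invariant under the flow $\varphi_t$ of $n$, even though $\tilde g$ itself is not. In flow-box coordinates $(s,y^1,\dots,y^n)$ with $n=\p_s$, the hypotheses $L_n g_T=0$ and $g_T(n,\cdot)=0$ force $g_T=\sum_{ij}(g_T)_{ij}(y)\,\dd y^i\,\dd y^j$ with coefficients independent of $s$, while $\alpha(n)=1$ gives $\alpha=\dd s+\sum_i a_i(s,y)\,\dd y^i$. A block-determinant computation then yields $\det\tilde g=\det\big((g_T)_{ij}\big)$, so that $\dd\tilde g=\sqrt{\det(g_T)_{ij}}\;\dd s\wedge\dd y^1\wedge\cdots\wedge\dd y^n=\alpha\wedge\mu_T$ with $\mu_T$ the (basic, hence closed) transverse volume. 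Since the density is $s$-independent, $L_n\,\dd\tilde g=0$, i.e.\ $\varphi_t$ preserves the probability measure $\dd\tilde g/\mathrm{vol}(H)$; the same computation on any compact saturated $N$ shows $\varphi_t$ preserves $\dd\bar g$.

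\emph{Ergodic theory on orbit closures.} Setting $f:=\omega(n)$, one has $f\circ\varphi_s(x)=\omega(\dot x)$ along the orbit through $x$, so the averages in (a) and (b) are exactly the forward and backward Birkhoff sums of $f$. Birkhoff's theorem for $\dd\tilde g$ gives their $\mu$-a.e.\ existence, their a.e.\ equality, and $\int_H f^{+}\,\dd\tilde g=\int_H f\,\dd\tilde g$. To obtain existence \emph{everywhere} and to identify the limit, I invoke Carri\`ere's theorem: each orbit closure is a torus $T$, and since the orbit is dense in $T$ the flow on $T$ is minimal; being conjugate to a minimal linear torus flow it is uniquely ergodic. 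Hence for \emph{every} $x\in T$ both averages exist and equal the single number $c(T):=\frac{1}{\mu(T)}\int_T\omega(n)\,\dd\bar g$, with uniform convergence on $T$. This already yields (a)$\Leftrightarrow$(b) orbit-closure by orbit-closure and, letting $N$ run over the orbit closures, ties the averages to the integrals in (d).

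\emph{The crux: constancy of $c(T)$.} The decisive point is that $c(T)$ is independent of $T$, and this is where $i_n\dd\omega=0$ and the ribbon argument enter. Flowing a short transversal curve $\sigma$ joining a point of $T$ to a point of a neighbouring orbit closure $T'$ produces a $2$-chain (ribbon) $R=\{\varphi_s(\sigma(u))\}$; because $\dd\omega(n,\cdot)=0$ the integrand $\dd\omega(n,\varphi_{s*}\dot\sigma)$ vanishes and $\int_R\dd\omega=0$, so Stokes gives that the two orbit integrals differ only by the boundary term $\int_\sigma\omega-\int_{\varphi_t\circ\sigma}\omega$. Dividing by $t$, the equality $c(T)=c(T')$ is \emph{equivalent} to the sublinear growth of that boundary term. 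I would derive this growth from the Riemannian nature of the flow: the orbit-closure foliation is $\varphi_t$-invariant and $\varphi_t$ acts by $g_T$-isometries on the transverse bundle (Molino: the linear holonomy of a Riemannian foliation lies in a compact group), so choosing $\dot\sigma$ normal to the orbit closures keeps $\varphi_{t*}\dot\sigma$ of bounded transverse norm; together with Poincar\'e recurrence for $\dd\tilde g$, the flowed transversal returns close to itself and the boundary term stays bounded along a sequence $t_k\to\infty$. Connectedness of $H$ and chaining of transversals then produce a single constant $c$ with $c(T)\equiv c$. \textbf{Controlling this ribbon boundary term is the main obstacle}: the shearing of $\varphi_t$ along $n$ could a priori make it grow linearly, and it is precisely the equicontinuity supplied by the transverse Riemannian structure that excludes this.

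\emph{Assembling the equivalences.} With $c(T)\equiv c$, disintegrating $\dd\tilde g$ over the orbit closures (whose conditional measures are, by unique ergodicity, the normalised $\dd\bar g_T$) gives $\int_H\omega(n)\,\dd\tilde g=c\cdot\mathrm{vol}(H)$, so (c) holds iff $c=0$. By the previous step $c=0$ iff the forward (resp.\ backward) average vanishes on one, hence on every, orbit, which is (a) and (b) together with their ``for one, hence for all'' clauses. Finally, for any compact saturated $N$ the same disintegration of $\dd\bar g$ over the orbit closures it contains, each contributing $c(T)=c$, yields $\int_N\omega(n)\,\dd\bar g=c\cdot\mathrm{vol}(N)$, so (d) is likewise equivalent to $c=0$. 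This closes the cycle (a)$\Leftrightarrow$(b)$\Leftrightarrow$(c)$\Leftrightarrow$(d).
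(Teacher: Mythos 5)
Your architecture (flow-invariant volume, Birkhoff, unique ergodicity on the Carri\`ere tori, ribbon comparison between orbit closures) is close in spirit to the paper's proof, but your crux step asserts something false. The claim that the Birkhoff average $c(T)$ takes a \emph{common value} $c$ on all orbit closures fails already on $H=T^2$: take $n=f(y)\,\p_x$ with $f=2+\sin(2\pi y)$, $g_T=\dd y\otimes \dd y$ (so $L_n g_T=0$, $g_T(n,\cdot)=0$, a Riemannian flow whose orbit closures are the circles $\{y=\mathrm{const}\}$) and $\omega=\dd x$, which is closed, hence $i_n\dd\omega=0$. The forward and backward averages along the orbit at height $y$ equal $\omega(n)=f(y)$, a nonconstant function of the orbit closure. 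This does not contradict the theorem (for this $\omega$ none of (a)--(d) hold), but it refutes your ``single constant $c$'' and the ensuing disintegration identity $\int_H\omega(n)\,\dd\tilde g=c\cdot\mathrm{vol}(H)$: only the \emph{vanishing} of the average (and, when nonzero, its sign) is shared by all orbits, which is precisely the weaker statement the paper's ribbon argument extracts and which still suffices for the equivalences.

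The same example pinpoints why your bounding mechanism for the ribbon boundary fails. With $\sigma$ a segment in $y$, the equal-time terminal curve $\varphi_t\circ\sigma$ has tangent $\varphi_{t*}\p_y=\p_y+t f'(y)\p_x$, so $\int_{\varphi_t\circ\sigma}\omega=t\bigl(f(y_1)-f(y_0)\bigr)$ grows \emph{linearly} in $t$ along every sequence: the bounded transverse $g_T$-norm you get from Molino controls only the class of $\varphi_{t*}\dot\sigma$ modulo $n$, not the shear along $n$, and Poincar\'e recurrence (closeness of positions) says nothing about this tangent growth. The paper (following the ribbon argument of Gurriaran--Minguzzi) repairs this by \emph{not} comparing equal flow times: each point of $\sigma$ is flowed for a $u$-dependent time so that the terminal transversal is horizontal, which bounds the boundary contribution by a constant $2B$, at the price that the two orbit segments have different lengths $t$ and $t'$ with only $t e^{-C}\le t'\le t e^{C}$. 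Dividing by $t$, this propagates the vanishing of the average (and the sign of a nonzero one) from orbit to orbit; the paper then runs an open-closed connectedness argument plus Birkhoff to get (a)$\Rightarrow$(c), and a sign argument for the backward averages to get (c)$\Rightarrow$(b), with (d) handled by volume preservation on saturated submanifolds. Your remaining ingredients --- the Schur-complement proof that $\dd\tilde g$ and $\dd\bar g$ are flow-invariant (equivalent to the paper's $\widetilde{\mathrm{div}}(n)=0$) and unique ergodicity on each orbit closure --- are sound, and your assembly can be salvaged by replacing ``constancy of $c(T)$'' with ``propagation of vanishing and sign'' throughout.
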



\begin{proof}

We observe that, using $L_n g_T=0$ and $\alpha(n)=1$, and denoting with $\{e_i\}$ and orthonormal basis of $\ker \alpha $ (see also the proof of \cite[Prop.\ 22]{minguzzi24c})
\begin{align} \label{nqov}
2 \widetilde{\textrm{div}} n&={\textrm{Tr}_{\tilde g}(L_{n} \tilde g)}=(L_{n} \tilde g)(n,n)+\sum_{i=1}^n (L_{n} \tilde g)(e_i,e_i)=2 (L_{n} \alpha)(n)=0.
\end{align}
As a consequence, the flow $(\phi_t)_{t\in \mathbb{R}}$ of $n$ is a measure preserving since it preserves the Riemannian volume element of $\tilde g$.
Let us denote $\kappa=\omega(n)$.
Then, Birkhoff's Ergodic Theorem for measure preserving flows gives \cite{birkhoff31,nemytskii60,arnold68}
\[
\int_H \kappa\,d\tilde{g} = \int_H \overline{\kappa}\,d\tilde{g},
\]
where $\overline{\kappa}$ is the function defined almost everywhere by
\[
\overline{\kappa}(p) = \lim_{t\rightarrow +\infty} \frac{1}{t}                   \int_0^{t}\kappa(\phi_s(p))ds.
\]
By Birkhoff's theorem this limit  converges almost everywhere. Assume (a), namely suppose that the limit is zero for some point, then, by Lemma \ref{metq}, the same holds for every point, hence, by Birkhoff's theorem,  $\int_H \omega(n) \dd \tilde g=0$. This shows that (a) implies its strengthened  form and (c).

Assume (c). We know, again by Birkhoff's theorem,  that the limit
\[
\lim_{t\to \infty} \frac{1}{t} \int_{-t}^0 \omega(\dot x) \dd t
\]
 exists almost everywhere. Let us pick two points of the set. Then Lemma \ref{metq} (in the past version) proves that the two limits share the same sign if one of them is non-zero. This means that if there is a point for which the integral is positive (negative) the limit function $\bar \kappa$ is positive  (resp. negative)  almost everywhere thus leading to a non-zero $\int_H \omega(n) \dd \tilde g$. This contradiction proves that one and hence all limits must be zero. This shows that (c) implies (b) (while (b) implies  its strengthened  form by Lemma \ref{metq}).


Thus (a) implies (c) which implies (b), and the converse is analogous, so  we have established the equivalence of (a), (b) and (c).


Let us prove (a) implies (d). In the proof of \cite[Prop.\ 24]{minguzzi24c} we proved that for $N=H$ or $N$ equal to a torus closure of an orbit, the restricted flow of $n$ is volume preserving.  Actually, the proof applies unaltered for any saturated (i.e.\ union of orbits) submanifold $N$. The Birkhoff's Ergodic Theorem  can thus be applied to the submanifold which proves (d). Conversely, when the last equation of point (d) is applied to $N=H$ we get point (c), thus (d) implies (c). $\hfill$ \qed
\end{proof}

\begin{remark} \label{cnnqa}
Properties (a) and (b), and hence (c) and (d), clearly are not altered changing $\omega$ by an exact form. Also these properties are not altered by the choice of $n$. Indeed, parametrizing the orbit with an arbitrary parameter $s$, it reads $C=\lim_{s\to \infty} \frac{1}{ t(s)}\int_{\gamma([0,s])}\omega$ where $n=\frac{\dd}{\dd t}$. The numerator does not depend on $n$. As for the denominator, if $n'=e^{-f} n$, $\frac{\dd t'}{\dd t}= e^f$ which is bounded from below and above on the compact set $H$, $m\le \vert \frac{\dd t'}{\dd t}\vert \le M$. Thus, $m t(s)\le t'(s)\le M t(s)$ and so $\frac{M}{t'(s)} \ge \frac{1}{t(s)}\ge \frac{m}{t'(s)}$, which shows that if the limit is zero for one choice of $n$ so it is for any other choice.
\end{remark}



Returning to a spacetime framework we have the following consequence

\begin{theorem}
\label{Birk app}
Suppose that the spacetime satisfies the null energy condition and let $n$ be any smooth lightlike field tangent to a compact  degenerate horizon $H$.  Let $\omega$ be the 1-form defined by Eq.\ (\ref{pprt}). Then, interpreting $H$ as a Riemannian flow structure, the properties of  Theorem \ref{Birkm} are satisfied.
\end{theorem}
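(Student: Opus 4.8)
The plan is to verify property (a) of Theorem \ref{Birkm} and then invoke the established equivalence to obtain all four conditions. First I would observe that the hypothesis $i_n\dd\omega=0$ required by Theorem \ref{Birkm} is exactly the content of Proposition \ref{cooe}, which holds precisely because we assume the dominant energy condition. Thus Theorem \ref{Birkm} is applicable to the Riemannian flow induced by $H$, with $\kappa=\omega(n)$.

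The heart of the argument is to connect the Birkhoff time averages of $\kappa$ with the affine completeness of the generators. Setting $X=n$ in (\ref{pprt}) gives $\nabla_n n=\kappa\, n$, so every integral curve $x(t)$ of $n$ is a pregeodesic. Reparametrizing by an affine parameter $\lambda$ and writing $F(t)=\int_0^t\kappa(x(s))\,\dd s$, a short computation with the pregeodesic equation shows $\dd\lambda/\dd t = c\,e^{F(t)}$ for a positive constant $c$ (this is the quantitative form of Hawking's observation recalled in the Introduction). Hence the generator is future complete iff $\int_0^{\infty}e^{F(t)}\,\dd t=\infty$ and past complete iff $\int_{-\infty}^{0}e^{F(t)}\,\dd t=\infty$. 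Since $H$ is degenerate, the dichotomy of \cite{reiris21,minguzzi21} forces every generator to be complete in both time directions.

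Next I would introduce the forward and backward averages $\bar\kappa^{+}(p)=\lim_{t\to\infty}\frac1t\int_0^t\kappa(\phi_s p)\,\dd s$ and $\bar\kappa^{-}(p)=\lim_{t\to\infty}\frac1t\int_{-t}^0\kappa(\phi_s p)\,\dd s$, which exist almost everywhere by Birkhoff's ergodic theorem applied to the measure-preserving flows of $n$ and $-n$ (the flow preserves $\dd\tilde g$ since $\widetilde{\mathrm{div}}(n)=0$, as used in the proof of Theorem \ref{Birkm}). If $\bar\kappa^{+}(p)<0$ at a point where the limit exists, then $F(t)$ is eventually bounded above by a negative multiple of $t$, so $\int_0^{\infty}e^{F}\,\dd t$ converges, contradicting future completeness; therefore $\bar\kappa^{+}\ge 0$ almost everywhere. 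Symmetrically, writing $F(-u)\sim -\bar\kappa^{-}(p)\,u$ as $u\to\infty$, past completeness forces $\bar\kappa^{-}\le 0$ almost everywhere. Finally, Birkhoff's theorem yields $\int_H\bar\kappa^{+}\,\dd\tilde g=\int_H\kappa\,\dd\tilde g=\int_H\bar\kappa^{-}\,\dd\tilde g$; combined with $\bar\kappa^{+}\ge 0\ge\bar\kappa^{-}$ a.e., the equality of these three integrals squeezes them all to zero, whence $\bar\kappa^{+}=0$ almost everywhere. This is exactly property (a) on a full-measure set of orbits—more than the single integral curve the statement requires—so by Theorem \ref{Birkm} all of (a)--(d) hold.

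I expect the main obstacle to be the second step: correctly deriving the affine-parameter ODE and translating the two completeness conditions into the two sign inequalities for the averages, while handling the borderline vanishing-average case. That case causes no difficulty here, because completeness is used only to deduce the weak inequalities $\bar\kappa^{+}\ge 0$ and $\bar\kappa^{-}\le 0$, and the equality of the three integrals then forces equality without any finer analysis of the exponential growth rate of $e^{F}$.
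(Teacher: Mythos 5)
Your proposal is correct, and it shares the paper's skeleton: Birkhoff's ergodic theorem for the $\dd\tilde g$-preserving flow (via $\widetilde{\mathrm{div}}(n)=0$), Proposition~\ref{cooe} to secure the hypothesis $i_n\dd\omega=0$ of Theorem~\ref{Birkm}, and the translation of time averages of $\kappa$ into affine (in)completeness of generators, contradicted by degeneracy. But the bookkeeping is genuinely different, in two respects. First, the paper's proof is a short contradiction at a \emph{single} Birkhoff-generic orbit: if its forward average were negative, the equivalence of \cite[Def.~4]{minguzzi21} would force \emph{all} generators to be future incomplete (positive average: all past incomplete), contradicting the mere existence of one complete generator; so it needs only the weak form of degeneracy plus the propagation-across-generators result. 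You instead import the full dichotomy of \cite{reiris21,minguzzi21} to conclude that \emph{every} generator is complete in both directions, derive the per-generator criterion $\dd\lambda/\dd t = c\,e^{F(t)}$ yourself, and run a two-sided squeeze: $\bar\kappa^{+}\ge 0\ge\bar\kappa^{-}$ almost everywhere together with $\int_H\bar\kappa^{\pm}\,\dd\tilde g=\int_H\kappa\,\dd\tilde g$ forces all three integrals to vanish and $\bar\kappa^{+}=0$ a.e. This buys you property (c) directly and property (a) on a full-measure set without invoking the ribbon-argument propagation hidden in the paper's cited equivalence, at the cost of leaning on the heavier dichotomy theorem and on both forward and backward ergodic averages. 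One step you state but do not justify is the ``only if'' direction of ``future complete iff $\int_0^\infty e^{F}\,\dd t=\infty$'': when the integral converges one must argue the geodesic is inextendible at the finite affine endpoint (the velocity $e^{-F}n$ blows up along a subsequence since $e^{F}$ has $\liminf$ zero and $n$ is bounded away from zero on the compact $H$); this is precisely the content of the equivalence in \cite{minguzzi21} that the paper itself cites, so citing it there would close that small gap.
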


As shown by the proof, the energy condition is only used to get the equality $Ric(n, \cdot)_{TH}=0$ or, equivalently, $i_n\dd \omega=0$.

\begin{proof}
 By Birkhoff's theorem the limit $\lim_{t\to \infty} \frac{1}{t} \int_0^t \kappa(x(t)) \dd t$  exists for some generator. It  cannot be different from zero, for  if it is negative then for the same generator $\int_0^t \kappa(x(t)) \dd t=-\infty$. By the equivalence proved in \cite[Def.\ 4]{minguzzi21} the generators of the horizons would be all future incomplete (the first paragraph of  \cite[Sec.\ 8]{minguzzi21} clarifies that the equivalence in  \cite[Def.\ 4]{minguzzi21}  holds under  $Ric(n, \cdot)_{TH}=0$ which we deduce from the null energy condition, see Cor.\ \ref{cooe} of this work). Similarly, if the limit is positive they would be all past incomplete, thus the limit is zero and so property (a) of Theorem \ref{Birkm} applies. $\hfill$ \qed
\end{proof}
%
%

The proof of the previous theorem uses results
 on the relationship between the behavior of the integral $\int\omega$ over the generators and the affine completeness of the generators, previously obtained in \cite{minguzzi21}. We stress that, as a consequence, the proof of the constancy of surface gravity in the degenerate case will not be independent of results obtained in the non-degenerate case. To the contrary, both the dichotomy and Theorem \ref{Birk app}   depend on recent findings that originated in the study of the non-degenerate case.

\section{Hodge theory and the isometric case} \label{hodrt}



Hodge theory establishes fundamental connections between the topology, geometry, and analysis of smooth manifolds. We rapidly introduce elements of this theory  so as to fix our notations and conventions. In this section we work on a  connected, {\em oriented} Riemannian manifold $(H,h)$ of dimension $n$, to be later interpreted as the horizon.  All structures shall depend on the metric $h$ and orientation.  Hodge theory for $C^3$ metric can already be tricky so, as already mentioned, we shall assume that the metric is smooth to avoid dealing with complications related to regularity.


The exterior algebra $\Lambda^* T^*H$ is embedded in the covariant tensor algebra via antisymmetrization in such a way that, for 1-forms $\alpha, \beta$, the wedge product satisfies \cite{misner73,spivak79,lee13} (Spivak's convention) $\alpha \wedge \beta = \alpha \otimes \beta - \beta \otimes \alpha$.
This extends to arbitrary $k$-forms through multilinear antisymmetric combinations: For 1-forms $\alpha_1, \dots, \alpha_k$,
\begin{equation}
\alpha_1 \wedge \cdots \wedge \alpha_k = \sum_{\sigma \in S_k} \text{sgn}(\sigma) \, \alpha_{\sigma(1)} \otimes \cdots \otimes \alpha_{\sigma(k)},
\end{equation}
where $S_k$ is the symmetric group. The exterior derivative $\dd : \Omega^k(H) \to \Omega^{k+1}(H)$ satisfies $\dd^2 = 0$ and the Leibniz rule
$\dd(\alpha \wedge \beta) = \dd\alpha \wedge \beta + (-1)^{\deg \alpha} \alpha \wedge \dd\beta$.

The metric induces a pointwise inner product on $k$-forms. For 1-forms:
\begin{equation}
\inner{\alpha}{\beta}_h = h^{ij} \alpha_i \beta_j \quad \text{(Einstein summation)}.
\end{equation}
For decomposable $k$-forms:
\begin{equation}
\inner{\alpha_1 \wedge \cdots \wedge \alpha_k}{\beta_1 \wedge \cdots \wedge \beta_k}_h = \det\bigl( \inner{\alpha_i}{\beta_j}_h \bigr).
\end{equation}
The global $L^2$ inner product for compactly supported forms is:
\begin{equation}
(\alpha, \eta) = \int_H \inner{\alpha}{\eta}_h  d h,
\end{equation}
where $d h$ is the Riemannian volume form, normalized such that $d h(e_1, \dots, e_n) = 1$ for any positively oriented orthonormal frame $\{e_i\}$.

The orientation and metric define the {\em Hodge star} $\hodgestar : \Omega^k(H) \to \Omega^{n-k}(H)$, characterized by:
\begin{equation}
\alpha \wedge \hodgestar \eta = \inner{\alpha}{\eta}_h  d h, \quad \forall \alpha, \eta \in \Omega^k(H).
\end{equation}
Equivalently, it can be characterized via $\star(e^1\wedge\cdots \wedge e^k )=e^{k+1}\wedge\cdots \wedge e^n$ for every $k$, where $\{e^i\}$ is any cobasis dual to a positively-oriented  basis $\{e_i\}$

Key properties include:
\begin{align}
&\hodgestar 1 = d h, \quad \hodgestar d h = 1, \\
&\hodgestar \hodgestar = (-1)^{k(n-k)} \quad \text{on} \quad \Omega^k(H),\\
&\hodgestar v^\flat=i_v d h \quad \text{for} \ v\in TH.
\end{align}
The $L^2$ inner product satisfies
\[
(\alpha, \eta)_{L^2} = \int_H \alpha \wedge \hodgestar \eta \quad \textrm{and} \quad (\hodgestar \alpha, \eta)_{L^2} = \int_H \alpha \wedge \eta.
\]
The codifferential $\delta : \Omega^k(H) \to \Omega^{k-1}(H)$ is the formal adjoint of $\dd$:
\begin{equation}
(\dd \alpha, \eta) = (\alpha, \delta \eta), \quad \alpha \in \Omega^{k-1}(H), \eta \in \Omega^k(H),
\end{equation}
with $\delta \equiv 0$ on $\Omega^0(H)$. It relates to $\dd$ and $\hodgestar$ via \cite{spivak79,warner83,lee13}:
\begin{equation}
\delta = (-1)^{n(k+1) + 1} \hodgestar \dd \hodgestar.
\end{equation}
The Laplace-Beltrami operator $\laplace : \Omega^k(H) \to \Omega^k(H)$ is defined as:
\begin{equation}
\laplace = \dd \delta + \delta \dd.
\end{equation}
This operator is symmetric and non-negative with respect to $(\cdot, \cdot)$.

A $k$-form $\omega$ is \textit{harmonic} if $\laplace \omega = 0$. On a compact manifold $H$, this condition simplifies significantly \cite[Lemma 28]{petersen06}\cite{jost11}:
\begin{theorem} \label{jxdr}
For $\omega \in \Omega^k(H)$ on a connected oriented compact manifold,
\begin{equation}
\laplace \omega = 0 \iff \dd \omega = 0 \;\;\text{and}\;\; \delta \omega = 0.
\end{equation}
\end{theorem}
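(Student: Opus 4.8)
The forward implication I would dispose of first, since it is immediate: if $\dd\omega=0$ and $\delta\omega=0$, then $\laplace\omega=\dd\delta\omega+\delta\dd\omega=0$ directly from the definition $\laplace=\dd\delta+\delta\dd$. All the content lies in the reverse implication, and the plan is to extract it from the non-negativity of $\laplace$ with respect to the $L^2$ pairing $(\cdot,\cdot)$. This pairing is available precisely because $H$ is compact (so the integrals converge and every smooth form is automatically in $L^2$) and oriented (so the volume form $\dd h$, and hence the inner product, is globally defined).

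The central step is to pair $\laplace\omega$ with $\omega$ itself. Assuming $\laplace\omega=0$, I would write
\[
0=(\laplace\omega,\omega)=(\dd\delta\omega,\omega)+(\delta\dd\omega,\omega),
\]
and then move each operator across the pairing using the adjunction $(\dd\alpha,\eta)=(\alpha,\delta\eta)$ that defines $\delta$. Applied with $\alpha=\delta\omega\in\Omega^{k-1}(H)$ and $\eta=\omega$, the first term becomes $(\delta\omega,\delta\omega)$; applied with $\alpha=\omega$ and $\eta=\dd\omega\in\Omega^{k+1}(H)$, together with the symmetry of the pairing, the second term becomes $(\dd\omega,\dd\omega)$. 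This yields
\[
0=(\laplace\omega,\omega)=(\dd\omega,\dd\omega)+(\delta\omega,\delta\omega).
\]

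Both summands on the right are squared $L^2$ norms, hence non-negative, so each must vanish separately: $(\dd\omega,\dd\omega)=0$ and $(\delta\omega,\delta\omega)=0$. Because the $L^2$ inner product is positive definite on continuous (in particular smooth) forms, the vanishing of these norms forces $\dd\omega=0$ and $\delta\omega=0$ pointwise, which completes the argument.

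The only genuinely delicate point — the place I would be most careful — is the justification of the adjunction $(\dd\alpha,\eta)=(\alpha,\delta\eta)$ invoked above. This is exactly where compactness and the absence of boundary enter: combining the formula $\delta=(-1)^{k(1+n-k)}\hodgestar\dd\hodgestar$ with Stokes' theorem, one sees that the boundary contribution $\int_H \dd(\alpha\wedge\hodgestar\eta)$ vanishes, so that $\dd$ and $\delta$ are honest formal adjoints with no correction terms. Once this is secured, the remainder of the proof is purely formal, dimension-independent, and valid for every degree $k$.
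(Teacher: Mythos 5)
Your proof is correct and takes essentially the same route as the paper's: both pair $\laplace\omega$ with $\omega$, use the adjointness of $\dd$ and $\delta$ to obtain $0=(\laplace\omega,\omega)=\|\dd\omega\|_{L^2}^2+\|\delta\omega\|_{L^2}^2$, and conclude from positive definiteness of the $L^2$ inner product, with the converse immediate from $\laplace=\dd\delta+\delta\dd$. Your extra remark justifying the adjunction via Stokes' theorem on the compact boundaryless $H$ is a sound elaboration of a step the paper takes as given, not a different argument.
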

\begin{proof}
Using the $L^2$ inner product:
\[
(\laplace \omega, \omega) = (\dd \delta \omega, \omega) + (\delta \dd \omega, \omega) = \|\delta \omega\|_{L^2}^2 + \|\dd \omega\|_{L^2}^2.
\]
Since $(\cdot, \cdot)$ is positive definite, $\laplace \omega = 0$ implies $\dd \omega = 0$ and $\delta \omega = 0$. The converse is immediate. $\hfill$ \qed
\end{proof}
The fundamental structural result for compact manifolds is \cite{spivak79,warner83}:
\begin{theorem}[Hodge Decomposition]
Let $(H,h)$ be a connected oriented compact Riemannian manifold. The space of $k$-forms decomposes orthogonally as:
\begin{equation}
\Omega^k(H) = \dd\Omega^{k-1}(H) \oplus \delta\Omega^{k+1}(H) \oplus \harm^k(H),
\end{equation}
where $\harm^k(H) = \{ \alpha \in \Omega^k(H) \mid \laplace \alpha = 0 \}$. This decomposition satisfies:
\begin{enumerate}
\item $\harm^k(H)$ is finite-dimensional
\item $\harm^k(H) \cong H^k_{\mathrm{dR}}(M)$ (de Rham cohomology)
\end{enumerate}
Thus, every de Rham cohomology class contains a unique harmonic representative.
\end{theorem}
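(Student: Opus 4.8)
The statement is the classical Hodge decomposition theorem, so the plan is to deduce it from the general spectral theory of self-adjoint elliptic operators on a compact manifold, after which everything reduces to formal manipulation with the identity $\dd^2=0$ and the adjointness $(\dd\,\cdot,\cdot)=(\cdot,\delta\,\cdot)$ recorded above.

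First I would establish the one genuinely analytic fact. The operator $\laplace=\dd\delta+\delta\dd$ is second order and formally self-adjoint, and it is elliptic: its principal symbol at a covector $\xi\neq 0$ is $\alpha\mapsto \xi\wedge i_{\xi^\sharp}\alpha+i_{\xi^\sharp}(\xi\wedge\alpha)=\vert\xi\vert_h^2\,\alpha$, hence invertible. For such an operator on a compact manifold the standard elliptic package---Gårding's inequality, Rellich compactness of the Sobolev embeddings, the Fredholm alternative, and elliptic regularity (to keep all forms smooth)---yields the $L^2$-orthogonal splitting
\[
\Omega^k(H)=\ker\laplace\oplus\mathrm{im}\,\laplace,
\]
with $\harm^k(H)=\ker\laplace$ finite-dimensional and $\mathrm{im}\,\laplace$ closed. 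This is the substantive step and essentially the only hard part; I would cite it from a standard source (Petersen, Jost) rather than reprove the estimates.

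Granting this, the decomposition follows formally. Since $\laplace\beta=\dd\delta\beta+\delta\dd\beta$, one has $\mathrm{im}\,\laplace\subseteq\dd\Omega^{k-1}(H)+\delta\Omega^{k+1}(H)$, so every $\omega$ writes as $\omega=\alpha_0+\dd\alpha+\delta\beta$ with $\alpha_0$ harmonic. The three summands are mutually orthogonal: $(\dd\alpha,\delta\beta)=(\dd\dd\alpha,\beta)=0$, while $(\dd\alpha,\alpha_0)=(\alpha,\delta\alpha_0)=0$ and $(\delta\beta,\alpha_0)=(\beta,\dd\alpha_0)=0$ because Theorem \ref{jxdr} gives $\dd\alpha_0=\delta\alpha_0=0$. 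Mutual orthogonality makes the sum direct and the decomposition unique, which proves the displayed formula and the finite-dimensionality claim.

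Finally I would identify $\harm^k(H)$ with $H^k_{\mathrm{dR}}(H)$ by sending a harmonic form to its de Rham class, well defined since harmonic forms are closed by Theorem \ref{jxdr}. Injectivity: an exact harmonic form $\alpha_0=\dd\gamma$ obeys $(\alpha_0,\alpha_0)=(\gamma,\delta\alpha_0)=0$, so $\alpha_0=0$. Surjectivity: for a closed $\omega=\alpha_0+\dd\alpha+\delta\beta$ one has $0=\dd\omega=\dd\delta\beta$, whence $(\delta\beta,\delta\beta)=(\beta,\dd\delta\beta)=0$, forcing $\delta\beta=0$ and $\omega-\alpha_0=\dd\alpha$; thus $\omega$ is cohomologous to its harmonic part. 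Injectivity is exactly the uniqueness of the harmonic representative in each class. Everything after the elliptic input is bookkeeping, so the anticipated obstacle is entirely the elliptic theory quoted in the second paragraph.
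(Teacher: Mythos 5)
Your proof is correct: the symbol computation $\xi\wedge i_{\xi^\sharp}+i_{\xi^\sharp}(\xi\wedge\cdot)=\vert\xi\vert_h^2$, the orthogonality identities via the adjointness $(\dd\,\cdot,\cdot)=(\cdot,\delta\,\cdot)$ together with Theorem \ref{jxdr}, and the injectivity/surjectivity argument identifying $\harm^k(H)$ with $H^k_{\mathrm{dR}}(H)$ are all sound, with the one genuinely analytic input (the splitting $\Omega^k(H)=\ker\laplace\oplus\mathrm{im}\,\laplace$, finite-dimensionality of $\ker\laplace$, and elliptic regularity to stay within smooth forms) correctly isolated and delegated to the standard elliptic package. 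The paper itself states this theorem as classical background without giving any proof, deferring to the same standard sources you invoke (Petersen, Jost), so your argument coincides with the canonical treatment on which the paper implicitly relies.
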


This theorem establishes harmonic forms as geometric representatives of topological invariants.


The following fact is well known, e.g.\ \cite[Lemma 3.8]{egidi23}. We include the proof for completeness

\begin{lemma} \label{cotr}
On the compact connected orientable Riemannian space $(H,h)$ let $\alpha$ be a harmonic form and let $K$ be a Killing field, $L_Kh=0$, then $L_K\alpha=0$.
\end{lemma}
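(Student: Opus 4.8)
The plan is to exploit the interplay between the Killing property and the Hodge-theoretic characterization of harmonicity from Theorem \ref{jxdr}. First I would recall that the Lie derivative along \emph{any} vector field commutes with the exterior differential, $L_K \dd = \dd L_K$, as is immediate from Cartan's formula $L_K = \dd i_K + i_K \dd$ together with $\dd^2=0$. The crucial additional input, which is where the hypothesis $L_K h = 0$ enters, is that the local flow $\varphi_t$ of $K$ acts by isometries and, on the connected manifold $H$, preserves the orientation; hence it preserves the metric and the Riemannian volume form, and therefore commutes with the Hodge star: $\varphi_t^* \hodgestar = \hodgestar \varphi_t^*$. Differentiating this relation at $t=0$ yields $L_K(\hodgestar \eta) = \hodgestar(L_K \eta)$ for every form $\eta$.

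Since the codifferential is built out of $\dd$ and $\hodgestar$ via $\delta = (-1)^{k(1+n-k)}\,\hodgestar\, \dd\, \hodgestar$, and both of these commute with $L_K$, I would conclude that $L_K \delta = \delta L_K$, and consequently that $L_K$ commutes with $\laplace = \dd\delta + \delta\dd$. Applying this to the harmonic form $\alpha$ gives $\laplace(L_K\alpha) = L_K(\laplace\alpha) = 0$, so that $L_K\alpha$ is again harmonic.

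The final step uses compactness to upgrade ``harmonic'' to ``zero.'' Because $\alpha$ is harmonic, Theorem \ref{jxdr} gives $\dd\alpha = 0$, so Cartan's formula collapses to $L_K\alpha = \dd(i_K\alpha)$; thus $L_K\alpha$ is exact. A form that is simultaneously harmonic and exact must vanish on a compact manifold: setting $\beta = i_K\alpha$ and using $\delta(L_K\alpha)=0$ (harmonicity, via Theorem \ref{jxdr}),
\[
\| L_K\alpha \|_{L^2}^2 = (\dd\beta,\, L_K\alpha) = (\beta,\, \delta L_K\alpha) = 0,
\]
whence $L_K\alpha = 0$. Equivalently, this is just the orthogonality of the exact and harmonic summands in the Hodge decomposition.

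I expect the only genuinely delicate point to be the naturality $L_K\hodgestar = \hodgestar L_K$, i.e.\ the invariance of the Hodge star under the isometric flow generated by $K$; once this is established, the remainder of the argument is purely formal and uses no property of $\alpha$ beyond harmonicity. In particular the proof does not require $\alpha$ to be of any specific degree, and the same scheme shows more generally that $L_K$ preserves the space $\harm^k(H)$ of harmonic forms while acting trivially on it.
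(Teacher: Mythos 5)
Your proof is correct and takes essentially the same route as the paper's: both show that $L_K\alpha$ is harmonic (via the commutation of $L_K$ with $\dd$ and $\delta$, which you justify in more detail through the naturality of the Hodge star under the isometry flow) and exact by Cartan's formula applied to the closed form $\alpha$, then conclude $L_K\alpha=0$ from the $L^2$-orthogonality of the harmonic and exact summands in the Hodge decomposition. The only differences are presentational—you route harmonicity through commutation with $\laplace$ and write out the orthogonality pairing explicitly, whereas the paper verifies closedness and coclosedness of $L_K\alpha$ directly—so no correction is needed.
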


\begin{proof}
Indeed, $L_K\alpha$ is closed as $\dd L_K\alpha=L_K\dd \alpha=0$ (here we did not use the Killing property),  and also coclosed because $L_K$ commutes with $\delta$ (as $\delta$ depends just on the metric) and so $\delta L_K\alpha=L_K\delta \alpha=0$. As $L_K\alpha$ is both closed and coclosed, it is harmonic. But by Cartan's formula $L_K \alpha=\dd (i_K\alpha)$ which proves that $L_K\alpha$ is both harmonic and exact, thus, by Hodge decomposition, $L_K \alpha=0$. $\hfill$ \qed
\end{proof}

Hodge theory will just be used to prove the following key result in Riemannian geometry.

 \begin{theorem} \label{mxrt}
Let $(H,h)$ be a compact connected orientable Riemannian manifold, let $n$ be a Killing field and let $\omega$ be a smooth 1-form such that $i_n\dd \omega=0$. Then there is a smooth function $f$ and a constant $c$ such that $n(f)=\omega(n)-c$.
 \end{theorem}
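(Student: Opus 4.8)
The plan is to reduce the equation \(n(f)=\omega(n)-c\) to a statement about the exact part of \(\omega\) in its Hodge decomposition, exploiting the Killing hypothesis through Lemma~\ref{cotr}. The pivotal first step is to rewrite the constraint \(i_n\dd\omega=0\) using Cartan's magic formula. Writing \(\kappa:=\omega(n)=i_n\omega\), one gets
\[
L_n\omega = i_n\dd\omega + \dd\,i_n\omega = \dd\kappa .
\]
Thus the hypothesis on \(\omega\) is exactly the assertion that \(L_n\omega\) is an \emph{exact} \(1\)-form. This converts the problem into purely Hodge-theoretic data and is the key observation of the proof.

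Next I would invoke the Hodge Decomposition on the compact connected oriented \((H,h)\) to write
\[
\omega = \dd f_0 + \delta\beta + \gamma,\qquad f_0\in C^\infty(H),\ \beta\in\Omega^2(H),\ \gamma\in\harm^1(H),
\]
with \(f_0\) smooth, and I claim that \(f:=f_0\) already solves the equation up to a constant. To extract this, I would apply \(L_n\) to the decomposition. Since \(L_n\) always commutes with \(\dd\), and since \(n\) is Killing (so \(L_n h=0\), whence \(L_n\) commutes with \(\hodgestar\) and therefore with \(\delta\)), while Lemma~\ref{cotr} yields \(L_n\gamma=0\) for the harmonic summand, one obtains
\[
L_n\omega = \dd\!\bigl(n(f_0)\bigr) + \delta\!\bigl(L_n\beta\bigr),
\]
i.e.\ an orthogonal splitting of \(L_n\omega\) into an exact and a coexact part (the harmonic part being absent).

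Comparing this with \(L_n\omega=\dd\kappa\) then finishes the argument: \(\dd\kappa\) is purely exact, so by the uniqueness (orthogonality) of the Hodge decomposition the coexact part must vanish, \(\delta(L_n\beta)=0\), and the exact parts agree, \(\dd\!\bigl(n(f_0)\bigr)=\dd\kappa\). Hence \(\dd\!\bigl(n(f_0)-\kappa\bigr)=0\), and since \(H\) is connected, \(n(f_0)-\kappa\) equals some constant \(c_0\). Setting \(f=f_0\) and \(c=-c_0\) gives \(n(f)=\omega(n)-c\), as required.

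The step I expect to be the crux is the claim that \(L_n\) respects the three summands of the Hodge decomposition, since this is precisely where the Killing hypothesis is indispensable: the commutation \(L_n\delta=\delta L_n\) rests on \(L_n h=0\), and the annihilation \(L_n\gamma=0\) of the harmonic representative is the content of Lemma~\ref{cotr}. Once these are in place the rest is routine — the smoothness of \(f_0\) comes from the Hodge theory of smooth forms, and the passage from a closed \(0\)-form to a constant uses only connectedness of \(H\).
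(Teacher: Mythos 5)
Your proposal is correct and takes essentially the same route as the paper: Hodge decomposition of \(\omega\), Lemma~\ref{cotr} to kill the harmonic part, the commutation \(L_n\delta=\delta L_n\) from the Killing hypothesis, and Cartan's formula together with \(L^2\)-orthogonality of exact and coexact forms. The only difference is organizational — the paper shows \(\alpha(n)\) and \(\beta(n)\) are individually constant and then evaluates the decomposition on \(n\), whereas you apply \(L_n\) to the whole decomposition once and compare with \(L_n\omega=\dd\kappa\), extracting the constant \(c\) from the connectedness of \(H\); both arguments are valid and use the same ingredients.
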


 \begin{proof}
 From the Hodge decomposition, we have
 \begin{equation}
 \label{equ 1}
 \omega = \alpha + \beta + \dd f
 \end{equation}
  where $\alpha$ is a harmonic 1-form ($\Delta \alpha = 0)$; $\beta$ is a co-exact, $\beta  = \delta \eta$ for some 2-form $\eta$, and
 $f$ is a function on $H$.
 Since $\alpha$ is harmonic and $n$ is Killing we have by Lemma \ref{cotr}, Cartan's magic formula and Thm.\ \ref{jxdr},
 \[
 0=L_n\alpha = i_n \dd \alpha +\dd (\alpha(n))=\dd (\alpha(n)),
  \]
  which implies $\alpha (n) = a = \mbox{const}$.
 It is well-known that as $n$ is Killing, $L_n$ commutes with $\delta$ (as we recalled in the proof of Lemma \ref{cotr}), so $L_n \beta=L_n\delta \eta=\delta L_n \eta$ which shows that $L_n \beta$ is co-exact.
 From (\ref{equ 1}), we have $\dd\omega = \dd\beta$ since $\dd\alpha = 0$ (a harmonic form is both closed and co-closed). Then we get $i_n \dd\beta = i_n \dd\omega = 0$.  Using Cartan's magic formula, it follows that $L_n\beta = \dd(i_n(\beta)) = \dd(\beta(n))$. We have proved that $L_n\beta$ is both co-exact and exact, which by Hodge $L_2$-orthogonal  decomposition implies that $L_n \beta$ vanishes.
 So, we  proved that $0=L_n\beta = \dd(\beta(n))$, which means that
 $\beta (n) = b = \mbox{const}$.
Evaluating (\ref{equ 1}) over $n$ we get  $\kappa = c + n(f)$ where $c:=a+b$ is a constant. $\hfill$ \qed
 \end{proof}

\begin{proposition} \label{nncd}
If for a given spacetime dimension and under the null energy condition orientable degenerate compact horizons admit zero surface gravity, then the same is true in the non-orientable case.

If for a given spacetime dimension and under the null energy condition orientable degenerate compact horizons with isometric flow admit zero surface gravity, then the same is true in the non-orientable case.
\end{proposition}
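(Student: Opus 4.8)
The plan is to reduce the non-orientable case to the orientable one by passing to the orientation double cover and then descending the resulting solution of the cohomological equation by averaging over the deck group; the two statements of the proposition are handled by one and the same argument, the second requiring only an extra check that the isometric property is inherited by the cover.

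First I would form the orientation double cover $\pi\colon\tilde H\to H$. Since $H$ is non-orientable, $\tilde H$ is connected, compact and orientable; let $\iota\colon\tilde H\to\tilde H$ be the free, orientation-reversing deck involution, so that $\pi\circ\iota=\pi$ and $\tilde H/\iota=H$. To realise $\tilde H$ as a genuine compact horizon in a spacetime satisfying the dominant energy condition, I would choose a tubular neighbourhood $U$ of $H$ in $M$; the inclusion $H\hookrightarrow U$ is a homotopy equivalence, so the class $w_1(TH)\in H^1(H;\mathbb{Z}_2)$ classifying $\tilde H$ is pulled back from a unique class in $H^1(U;\mathbb{Z}_2)$, which determines a double cover $\tilde U\to U$ restricting to $\pi$ over $H$. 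Pulling back the metric makes $\tilde U$ a time-oriented Lorentzian manifold that still obeys the dominant energy condition (a local, hence lift-stable, condition), inside which $\tilde H$ sits as an orientable compact horizon.

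Next I would lift the data. Writing $\tilde n$ for the lift of $n$ (so $\pi_*\tilde n=n$), $\tilde\omega=\pi^*\omega$ and $\tilde g_T=\pi^*g_T$, all defining relations survive because $\pi$ is a local isometry of the induced connection; in particular $i_{\tilde n}\dd\tilde\omega=\pi^*(i_n\dd\omega)=0$ and $L_{\tilde n}\tilde g_T=0$. Completeness lifts under path lifting, so a complete generator of $H$ yields a complete generator of $\tilde H$, whence $\tilde H$ is degenerate; for the second statement one checks in addition that a metric $h$ with $L_nh=0$ pulls back to $\tilde h=\pi^*h$ with $L_{\tilde n}\tilde h=0$, so the isometric property is inherited. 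The orientable hypothesis then supplies a future-directed lightlike geodesic field on $\tilde H$, equivalently a smooth $\tilde f$ solving the cohomological equation $\tilde n(\tilde f)=\tilde\omega(\tilde n)$ (by Remark \ref{ccor} together with the rescaling law \eqref{ren2}).

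Finally I would descend $\tilde f$ to $H$, and this is the step I expect to be the real crux. Because $\pi\circ\iota=\pi$, one has $\iota_*\tilde n=\tilde n$ and $\iota^*\tilde\omega=\tilde\omega$, so $\tilde\omega(\tilde n)$ is $\iota$-invariant and a short computation gives $\tilde n(\tilde f\circ\iota)=\tilde\omega(\tilde n)$ as well. Hence the averaged function $\hat f=\tfrac12(\tilde f+\tilde f\circ\iota)$ is again a solution and is $\iota$-invariant, so it descends to a smooth $f$ on $H$ with $n(f)=\omega(n)$, i.e. $n'=e^{-f}n$ is geodesic. The two delicate points are thus (i) confirming that the double cover genuinely inherits every hypothesis — connectedness from non-orientability, the dominant energy condition and horizon structure from the spacetime cover, and degeneracy from the lifting of a complete generator — and (ii) the descent itself, which succeeds precisely because the averaging is carried out on data, $\tilde n$ and $\tilde\omega(\tilde n)$, that are invariant under the deck involution.
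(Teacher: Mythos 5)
Your proposal is correct, and it follows the same overall skeleton as the paper (pass to an orientable double cover, lift the data, apply the orientable-case hypothesis, descend), but the two key steps are carried out differently. For the cover, the paper does not work with a tubular neighborhood: it passes to the time-oriented orientable double covering $\pi:\tilde M\to M$ of the \emph{whole spacetime} and sets $\tilde H=\pi^{-1}(H)$, which is orientable (a null hypersurface is two-sided, its conormal being trivialized by $g(n,\cdot)$) and connected when $H$ is non-orientable; your route instead builds the orientation double cover of $H$ and extends it over a tubular neighborhood via the isomorphism $H^1(U;\mathbb{Z}_2)\cong H^1(H;\mathbb{Z}_2)$, which guarantees orientability by construction at the price of this extension step. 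For the descent — the crux — the paper never passes to the cohomological equation: it takes the zero-surface-gravity field $\tilde n$ on $\tilde H$, locally pushes forward its two branches over $U\subset H$ to get $n_1,n_2$, sets $n=n_1+n_2$ (globally well defined by invariance under the exchange $1\leftrightarrow 2$), and concludes $\omega(n)=0$ from the trace identity $\textrm{tr}(\nabla n)=\omega_1(n_1)+\omega_2(n_2)=0$. You instead write the geodesic field as $e^{-\tilde f}\tilde n$ with $\tilde n$ the lift of a chosen $n$ on $H$ (legitimate, since any two future-directed lightlike tangent fields differ by a positive factor) and deck-average the potential, using $\iota_*\tilde n=\tilde n$ and the $\iota$-invariance of $\tilde\omega(\tilde n)=\kappa\circ\pi$; this is the standard covering-space averaging argument, arguably more transparent, whereas the paper's vector-field summation avoids any explicit mention of the deck involution or of the cohomological reformulation via Remark \ref{ccor} and Eq.\ (\ref{ren2}). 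One further small difference in your favor: you obtain degeneracy of $\tilde H$ by lifting a single complete generator, which needs nothing beyond the local isometry, while the paper's phrasing (``every generator of $\tilde H$ projects on some complete generator'') implicitly invokes the completeness dichotomy for degenerate horizons.
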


\begin{proof}
Let $\pi: \tilde M\to M$ be the (time-oriented) orientable double covering of $(M,g)$, then $\tilde H:=\pi^{-1}(H)$ is an orientable compact double covering of $H$ (which is disconnected if $H$ is already orientable). Every generator of $\tilde H$ projects on some complete generator and so is itself complete. Thus $\tilde H$ is degenerate. If $H$ is isometric for some choice of vector field and metric $(n,h)$, $L_nh=0$, then the same is true for $\tilde H$, it is sufficient to lift $n$ and $h$ by using the local diffeomorphism.

By assumption there is a lightlike field $\tilde n$ and corresponding 1-form $\tilde \omega$ on $\tilde H$ such that $\nabla \tilde n= \tilde n \otimes  \tilde \omega$, $\tilde \omega(\tilde n)=0$.

Every point $q$ on $H$ has an open neighborhood $U$ which is (spacetime) isometric to open neighborhoods $\tilde U_1$ and $\tilde U_2$ of the inverse images $\tilde q_1$ and $\tilde q_2$ of  $q$ (the actual labelling is irrelevant as it will be used locally to define objects in $U$ which are well defined as invariant under exchange $1 \leftrightarrow 2$).

For each $p\in U$ we have inverse images $\tilde p_1\in \tilde U_1$ and $\tilde p_2\in \tilde U_2$ of the point $p$.
 Let  $n_i=\pi_*(\tilde n(\tilde p_i))$ and similarly project $\omega$ with the diffeomorphism.
 Then with obvious notation and identifications we have on $p$, and hence on $U$, $\nabla n_i= n_i \otimes\omega_i$ and $\omega_i(n_i)=0$. Let $n=n_1+n_2$, and observe that, once $\tilde n$ has been chosen, the construction of this vector field can be repeated for a covering $\{U_i\}$ of $H$ so leading to a well defined  future-directed lightlike field over $H$ (again because of invariance under exchange $1 \leftrightarrow 2$). In particular, there is $\omega: H\to T^*H$ such that $\nabla n=n \otimes \omega$.

  On each element of the covering we have $\nabla n= n_1 \otimes  \omega_1+ n_2  \otimes\omega_2$ but also $\nabla n=  n \otimes \omega$. Taking the trace of the former equation $\textrm{tr}(\nabla n)=\omega_1(n_1)+\omega_2(n_2)=0$ where we used vanishing of surface gravity on $\tilde H$, thus $0=\textrm{tr}(\nabla n)=\omega(n)$, which means vanishing of surface gravity with the choice $n=n_1+n_2$ all over $H$. $\hfill$ \qed
\end{proof}

We are ready to prove the main result of this section. It has no restrictions on the spacetime dimension.

 \begin{theorem} \label{mcrt}
On a smooth spacetime $(M,g)$ suppose the null energy condition holds and let $H$ be a smooth compact degenerate horizon whose Riemannian flow is isometric. Then it admits a smooth tangent vector field of zero surface gravity (i.e.\ geodesic).
 \end{theorem}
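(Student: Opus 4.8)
The plan is to reduce the statement to the cohomological equation (\ref{scpt}) and then to solve that equation by combining the Hodge-theoretic Theorem \ref{mxrt} with the ergodic vanishing result of Theorem \ref{Birk app}. First I would invoke the isometric hypothesis: by definition there exist a Riemannian metric $h$ and a choice of lightlike generator $n$ with $L_n h = 0$, and by the normalization of \cite[Remark 3]{minguzzi24c} one may take $h(n,n)=1$. Thus $n$ is a Killing field for $(H,h)$. I would treat the orientable case first, so that Hodge theory is available; the non-orientable case then follows at once from the second statement of Proposition \ref{nncd}, since passing to the orientable double cover preserves both degeneracy and the isometric structure.

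Let $\omega$ be the $1$-form determined by the totally geodesic equation $\nabla_X n = \omega(X) n$ and set $\kappa = \omega(n)$. The dominant energy condition supplies, through Proposition \ref{cooe}, the structural identity $i_n \dd\omega = 0$. With $n$ Killing and this identity in hand, Theorem \ref{mxrt} produces a smooth function $f$ and a constant $c$ with
\[
n(f) = \omega(n) - c = \kappa - c.
\]

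The heart of the argument, and the step I expect to be the main obstacle, is forcing this constant to vanish, $c = 0$. This is exactly where the degenerate hypothesis enters, through Theorem \ref{Birk app}: since $H$ is a compact degenerate horizon under the dominant energy condition, property (a) of Theorem \ref{Birkm} holds, namely $\lim_{t\to\infty}\frac{1}{t}\int_0^t \omega(\dot x)\,\dd t = 0$ along any integral curve $x$ of $n$ (so that $\dot x=n$ and $\omega(\dot x)=\kappa$). Integrating the relation $n(f)=\kappa-c$ along such a curve gives $\int_0^t \kappa(x(s))\,\dd s = f(x(t)) - f(x(0)) + ct$; dividing by $t$ and letting $t\to\infty$, the boundary term is bounded (as $f$ is continuous on the compact $H$) while the left-hand average tends to $0$, and hence $c = 0$. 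The genuinely non-trivial content is thus concentrated in the two imported results—the Hodge decomposition argument producing $f$ up to a constant, and the ribbon/ergodic argument pinning that constant to zero.

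Once $c=0$ we have solved the cohomological equation $n(f)=\kappa$. By Remark \ref{ccor}, equivalently by inspecting the transformation law (\ref{ren2}), the rescaled field $n' = e^{-f} n$ satisfies $\kappa' = e^{-f}(\kappa - n(f)) = 0$, so $n'$ is a smooth future-directed lightlike tangent field of zero surface gravity, i.e.\ geodesic. Extending from the orientable to the non-orientable case via Proposition \ref{nncd} then completes the proof; beyond citing the two main inputs, the remaining verification is routine.
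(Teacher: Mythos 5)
Your proposal is correct and follows essentially the same route as the paper: reduction to the orientable case via Proposition \ref{nncd}, a Killing choice of $n$ from the isometric hypothesis, $i_n\dd\omega=0$ from Proposition \ref{cooe}, Theorem \ref{mxrt} yielding $n(f)=\kappa-c$, Theorem \ref{Birk app} forcing $c=0$, and the rescaling $n'=e^{-f}n$. The only (immaterial) difference is in how $c=0$ is extracted: the paper first rescales so that $\kappa'=ce^{-f}$ has the fixed sign of $c$ and invokes the integral vanishing $\int_H \kappa'\,\dd\tilde h'=0$ (property (c) of Theorem \ref{Birkm}), while you integrate $n(f)=\kappa-c$ along an orbit and use the orbit-average property (a) together with boundedness of $f$ — these are equivalent by Theorem \ref{Birkm}, and your variant is equally valid.
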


 \begin{proof}
 By Prop.\ \ref{nncd} it is sufficient to prove the statement for $H$ orientable.
 Since the flow is isometric we can  choose a lightlike vector field $n$ which is Killing for some  Riemannian metric $h$ on $H$.
 Let $\omega$ be the 1-form defined by Eq.\ (\ref{pprt}).  From the null energy condition and Cor.\ \ref{cooe} we have $i_n\dd \omega=0$. As all the assumptions of Theorem \ref{mxrt} are satisfied there are a smooth function $f$ and a constant $c$ such that $n(f)=\omega(n)-c=\kappa-c$.
Setting  $n^\prime = e^{-f}n$ we get from Eq.\ (\ref{ren2}) that   $\kappa^\prime = ce^{-f}$ which has the sign of $c$.

  In the degenerate case, since by Thm.\ \ref{Birk app}, $\int_H \kappa^\prime d\tilde h'= 0$ (here $h'=g_T+\alpha'\otimes \alpha'$, $\alpha'=e^f \alpha$) it follows that the constant $c$ is zero and hence $\kappa^\prime = 0$. $\hfill$ \qed
 \end{proof}

The following proof could be completely skipped as we shall give two more  proofs independent of dimension in the following sections.

\begin{theorem} \label{mcrf}
On a smooth   spacetime $(M,g)$ of dimension 2,3 or 4, suppose the null energy condition holds and let $H$ be a smooth compact degenerate horizon. Then it admits a smooth lightlike tangent vector field $n$ of zero surface gravity (i.e.\ geodesic).
\end{theorem}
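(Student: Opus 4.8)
Theorem \ref{mcrf} asserts the geodesic normalization in spacetime dimension $\le 4$. The strategy is to reduce every case to Theorem \ref{mcrt} (the isometric case), invoking the structural classification of Theorems \ref{car} and \ref{carf}. First I would dispose of dimensions 2 and 3: by Theorem \ref{carf} the horizon is $S^1$ or a torus, and in all cases the flow is isometric, so Theorem \ref{mcrt} applies immediately. For dimension 4, by Proposition \ref{nncd} it suffices to treat the orientable case, and Theorem \ref{car} then enumerates the possible Riemannian flow structures (i), (ii-a), (ii-b), (iii-a), (iii-b), (iv). The concluding sentence of Theorem \ref{car} tells us that \emph{all of these except (iii-b)} carry an isometric flow, so for every structure other than $T^3_A$ we again conclude by Theorem \ref{mcrt}. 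The entire difficulty is thus concentrated in the single exceptional case (iii-b).

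\textbf{The hard case (iii-b).} Here $H$ is diffeomorphic to the hyperbolic torus bundle $T^3_A$ with $\operatorname{tr}A>2$, and the flow is conjugate to one of the explicit non-isometric flows of \cite[Example 2]{minguzzi24c}. The isometric argument is unavailable because there is no Killing field for which to run the Hodge decomposition of Theorem \ref{mxrt}. My plan is to exploit instead the ergodic input already secured: by Theorem \ref{Birk app}, in the degenerate case the averaged quantity of Theorem \ref{Birkm} vanishes, i.e.\ $\int_H \omega(n)\,\dd\tilde g=0$ and the Birkhoff time-averages of $\kappa=\omega(n)$ are zero along every generator. Combined with Proposition \ref{cooe} ($i_n\dd\omega=0$), the task reduces to solving the cohomological equation $n(f)=\kappa$ on $T^3_A$ for this specific $\kappa$. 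The foliation on $T^3_A$ restricts, on each torus fiber (the closure of a generator), to a \emph{Diophantine linear flow} — this is precisely the feature the introduction flags as the reason the general theory of the cohomological equation \cite{dehghan07} is needed only here. So I would invoke the solvability results for the cohomological equation under a Diophantine (linear) flow: given that the obstruction integral vanishes and $\kappa$ is smooth with the special structure coming from $\omega$, one obtains a smooth $f$ with $n(f)=\kappa$, whence by Remark \ref{ccor} the rescaled field $n'=e^{-f}n$ has zero surface gravity.

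\textbf{Main obstacle and alternative.} The delicate point is regularity of the solution $f$ across the whole bundle. Diophantine linear flows on a single torus admit smooth solutions of the cohomological equation once the zero-average condition holds, but $T^3_A$ is a torus \emph{bundle} over $S^1$ with a hyperbolic gluing, so one must check that the fiberwise solutions assemble into a single globally smooth function compatible with the monodromy $A$ — the small-divisor estimates could in principle degenerate along the base. I would handle this by working with the special normal form of \cite[Example 2]{minguzzi24c}, where the flow direction and the Diophantine data are explicit, verifying that the arithmetic conditions are uniform and that the contracting/expanding eigendirections of $A$ do not spoil smoothness. The paper signals that \emph{two} independent arguments are available for this case; the ergodic-Hodge route above is one, and the cohomological-equation route via \cite{dehghan07} is the other, so if the direct global-gluing estimate proves awkward I would fall back on the second argument. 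Once case (iii-b) is dispatched, collecting all cases completes the proof.
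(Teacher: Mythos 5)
Your reduction skeleton coincides exactly with the paper's: Proposition \ref{nncd} reduces to the orientable case, Theorem \ref{carf} disposes of dimensions 2 and 3, and Theorem \ref{car} combined with Theorem \ref{mcrt} handles every 4-dimensional structure except (iii-b); for $T^3_A$ you then follow the same route as the paper's first argument, namely solving the cohomological equation via Dehghan-Nezhad and El Kacimi Alaoui \cite{dehghan07}. But the key step as you state it has a genuine gap: the hypothesis of \cite[Thm.\ 2.4]{dehghan07} is not that some global obstruction integral vanishes, but that the integral of $\kappa'$ vanishes on \emph{every torus fiber}, computed with the \emph{canonical} measure of the fiber in the trivialization. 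The ergodic inputs you secure --- $\int_H \omega(n)\,\dd\tilde g=0$ and zero Birkhoff time averages along generators --- do not by themselves deliver this. The paper bridges the mismatch in two steps: first, fiberwise vanishing $\int_{T^2}\kappa\,\dd\bar g=0$ with respect to an \emph{invariant} measure on each fiber, obtained from \cite[Prop.\ 24]{minguzzi24c} (equivalently, property (d) of Theorem \ref{Birkm}, applicable because each torus fiber is the closure of an orbit and hence a compact saturated submanifold --- a property you never invoke); second, unique ergodicity of the irrational linear flow on each fiber, which identifies the normalized invariant measure with the canonical one. Without both steps, ``the obstruction integral vanishes'' is simply not the hypothesis of the theorem you cite. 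You also take the Diophantine property of the flow direction for granted; the paper must (and does) prove it, via Roth's theorem, using that the eigenvector slope $c=(\lambda_+-A_{11})/A_{12}$ is an algebraic irrational.

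Two secondary points. Your monodromy and small-divisor worries are dissolved by the paper's globalization, which you do not supply: one solves the equation only over $\pi^{-1}(I)$ for intervals $I\subset S^1$, where in the trivialization the field is $r(u)\,n'$ with $n'$ a \emph{fixed} constant Diophantine vector (so the arithmetic estimates are trivially uniform in $u$, and one passes between $n(f)=\kappa$ and $n'(f)=\kappa'$ by multiplying by $r(u)$), and then glues with a partition of unity lifted from the base; since $n(\varphi_i)=0$, the sum $f=\sum_i\varphi_i f_i$ still solves the equation, and the hyperbolic gluing map $A$ never enters. Finally, your proposed fallback is illusory as described: your main route \emph{is} the cohomological-equation route, whereas the paper's genuinely independent second argument is of a different kind --- it uses Dom\'{\i}nguez's tenseness theorem to choose a bundle-like metric with $\mu$ basic and closed, Carri\`ere's computation $H^2_b(\mathcal{F})=0$ for $T^3_A$ to write $\dd\omega=\dd\alpha$ with $\alpha$ basic, the Betti number $b_1(T^3_A)=1$, and the nonvanishing of the \'Alvarez class $[\mu]$ (precisely because the flow is non-isometric) to conclude $\omega-\alpha=\lambda\mu+\dd f$ and hence $\kappa=n(f)$ --- so if your Diophantine argument broke down you would not in fact have a second argument to fall back on.
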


\begin{proof}
By Prop. \ref{nncd} it is sufficient to consider the orientable case, and by Thm.\ \ref{mcrt} it is sufficient to consider the non-isometric case. In dimension 2 and 3 this concludes the proof as there are no orientable compact horizons with non-isometric Riemannian flow, see Thm.\ \ref{carf}.
In dimension 4 we are left with the  structure of case (iii-b) of Theorem \ref{car}. This is explored in  Carri\`ere original paper \cite{carriere84} and \cite[Example 2]{minguzzi24c}. It is known as the hyperbolic torus $T^3_A$, where $A\in SL(2, \mathbb{Z})$, $\textrm{tr} A>2$, because it can be obtained as a suspension of base $T^2$ with the resulting flow being Anosov's (roughly ``suspension'' here refers to the construction which goes as follows:  consider the product $T^2\times [0,1]$ and identify the two $T^2$ bases after acting with $A$ on one of them). However, in Carri\`ere non-isometric construction the Riemannian flow is not the natural one of the suspension but that of one of the two  eigenvectors of $A$.

We continue with two independent arguments. The former uses the possibility of solving the cohomological equation via Fourier analysis over torus bundles, under a Diophantine condition, as shown by Dehghan-Nezhad and El Kacimi Alaoui \cite{dehghan07} (our argument moves from their results so we do not need further Fourier type arguments). The latter makes use of some deep and powerful  results by {{\'A}lvarez L{\'o}pez} and Dom{\'\i}nguez on the possibility of finding a bundle like metric with a basic mean curvature form  $\mu$ (we recalled this concept in our previous work \cite{minguzzi24c}).

\begin{itemize}
\item[-]{\em Proof via cohomological equation for $T^2$ bundles.} \\
 Topologically, $T^3_A$ is  a $T^2$-bundle over $S^1$, $\pi:T^3_A\to S^1$.  We denote with $u$ the standard coordinate of $S^1$ identified as $[0,1]$. For any interval $I\subset S^1$, $\pi^{-1}(I)$ is diffeomorphic to $I\times T^2$ and the flow is conjugate (without parameter), by the same diffeomorphism, to the linear flow of a constant eigenvector of $A$ (that is, consider the corresponding linear flow on $T^2$ and lift it to $I\times T^2$.). This eigenvector has rationally independent components and so the tori fibers are really the closures of the orbits of Carri\`ere non-isometric flow.

 We have shown in \cite[Example 2]{minguzzi24c} in some detail, that the two roots $\lambda_+,\lambda_-$ of the equation $\lambda^2-\textrm{tr} (A) \lambda +1$  are irrational. We set $v_+=(a,b)^T$ for one eigenvector of $A$ and by imposing $A(v_+)=\lambda_+ v_+$, we find $a,b\ne 0$ with an irrational ratio $c=b/a$. A similar conclusion is reached considering $\lambda_-$ and $v_-$. As any eigenvector can be rescaled, we can consider $v_+=(1,c)$ where the first line of $A(v_+)=\lambda_+ v_+$  gives $A_{11}+ c A_{12}=\lambda_+$. But $A$ has integer coefficients so $A_{12}\ne 0$ and hence $c=\frac{\lambda_+-A_{11}}{A_{12}}$. As all the numbers on the right-hand side are algebraic so is $c$. Roth's theorem \cite{roth55,hindry00} states that for every   irrational algebraic number $c$ there are positive constants $\tau>1$ and $C<1$
such that for every integers $p,q$, $q\ne 0$,   $\vert c+p/q\vert \ge C/\vert q\vert ^{1+\tau}$. As a consequence, for every   irrational algebraic number $c$ there are positive constants $\tau>1$ and $C<1$ such that for every integers $p,q$, $\vert p+q c \vert \ge  C/\max(\vert p\vert,\vert q\vert) ^{\tau}$, which  implies that  for every vector $z\in \mathbb{Z}^2$, $ \vert z \cdot v_+ \vert \ge   C/\max(\vert z_1\vert,\vert z_2\vert) ^{\tau}$. In other words, $v_+$ is a Diophantine vector \cite{schmidt80,dehghan07}. An analogous conclusion holds for $v_-$.

By Remark \ref{ccor} it is sufficient to prove that the cohomological equation can be solved over $T^3_A$ endowed with Carri\`ere's non-isometric flow. Let $n$ be the vector field generating the flow as provided in coordinates, for instance, in \cite[Ex.\ 2]{minguzzi24c} (there it is denoted $Y$ and $u$ is denoted $x$). There it is shown that in the local trivialization it is the constant  vector $v_{\pm}$ up to a factor dependent on $u$.

 If we can solve the cohomological equation on sets of the form $\pi^{-1}(I)$ then we can solve it all over the bundle, it is sufficient to lift a partition of unity relative to a covering $\{I_i\}$ of   $S^1$ to get a partition of unity $\{\varphi_i\}$ over the bundle relative to a covering $\pi^{-1}(I_i)$, then solve $n(f_i)= \kappa$ on $\pi^{-1}(I_i)$, multiply by $\varphi_i$, sum over $i$, and use $n(\varphi_i)=0$ to get that $f=\sum_i \varphi_i f_i$ provides a solution. We can thus restrict ourselves to a set of the form $\pi^{-1}(I)$ and so work on $I \times T^2$ with a vector field $n=r(u) n'$ where $n'$ is a constant Diophantine vector. As
 $n(u)=0$, from Eq.\ (\ref{ren2}), $\kappa=r(u) \kappa'$.

 We proved in Theorem \ref{Birk app}, see also \cite[Prop.\ 24]{minguzzi24c}, (this step uses the null energy condition) the vanishing integral property over each $T^2$-fiber (hence constant $u$), $\int_{T^2} \kappa \dd \bar g=0$. Here $\dd \bar g$   is a measure on each fiber torus which is invariant under the flow of $n$, and derived from a particular Riemannian metric. It is also invariant  under the flow of $n'$ because $n(u)=0$ (use the expression of the divergence of $r(u)n'$). This implies that $\int_{T^2} \kappa' \dd \bar g=0$ where   $\dd \bar g$ is invariant under the flow of $n'$.
However, the Diophantine vector $n'$ has rationally independent components, and for such a linear flow we have unique ergodicity\footnote{In the map case see, for instance, \cite[Cor.\ 4.15]{einsiedler11}. For the flow case that interests us see, e.g.\, J.-F.\ Quint, {\em Examples of unique ergodicity of algebraic
flows}, Lectures at Tsinghua University, Beijing, November 2007, Example 1.2.11.
 The flow case can be proved imposing for any continuous function $f: T^n \to \mathbb{R}$ and any $t \in \mathbb{R}$,
$\int_{T^n} f(\phi_t(x)) \, d\mu(x) = \int_{T^n} f(x)  d\mu(x)$, and proceeding
with a Fourier expansion of $f$. For related material see \cite[Sec.\ 51]{arnold89}.} that is, there is really only one invariant Borel probability measure on each fiber torus of $I\times T^2$, and  in place of $\dd \bar g$ we could use the canonical one coming from the trivialization.



This geometry, in which we have a torus bundle over a manifold (connected and orientable but not necessarily compact), and a Riemannian flow which can be represented, in a local trivialization of the bundle, by a Diophantine linear flow on the torus fiber independent of the coordinate(s) $u$ of the base, has been considered by  Dehghan-Nezhad and  El Kacimi Alaoui \cite{dehghan07}. Their result \cite[Thm.\ 2.4, also p.\ 1114-15]{dehghan07} states that if $n'$ is a Diophantine vector field generating the Riemannian flow, then the cohomological equation $n'(f)=\kappa'$ has a smooth solution for any smooth $\kappa'$ such that the integral of $\kappa'$ on every torus fiber vanishes. Here the measure to be used is the canonical one of the torus in the trivialization. As explained in the previous paragraphs, we just need to apply their theorem to $I\times T^2$, and the same paragraphs show that the integral assumption is satisfied. By multiplying $n'(f)=\kappa'$ by $r(u)$ we get that the solution $f$ also solves $n(f)=\kappa$, and now we can use the partition of unity to globalize.


In conclusion, the cohomological equation $n(f)=\kappa$ over $T^3_A$
can be solved for $\kappa=\omega(n)$,  and so the horizon admits a smooth lightlike tangent field of zero surface gravity cf.\ Remark \ref{ccor}.

\item[-]{\em Proof via {{\'A}lvarez L{\'o}pez}'s and Dom\'{\i}nguez's theorems.}\\
We recall that if $g_B=g_T+\chi \otimes \chi$, is the bundle-like metric (i.e.\ $L_X g_B=0$ for one and every vector field tangent to the foliation and $\chi$ is positive on the foliation, see e.g.\ \cite[Prop.\ 2]{minguzzi24c}), then $n$ is defined by the normalization $\chi(n)=1$ and the mean curvature 1-form is $\mu:=i_n\dd \chi$. It is basic if $i_n \mu =0$, which is always the case, and $i_n \dd \mu=0$ which does not necessarily hold. {{\'A}lvarez L{\'o}pez} has shown that the basic component $\mu_b$ (using orthogonality in the sense of forms induced by $g_B$) determines a class (the {\'A}lvarez  class) $[\mu_b]$ which does not depend on the bundle-like metric \cite{alvarez92}. In any case, by Dominguez's theorem \cite{dominguez95,dominguez98}\cite[p.\ 81]{tondeur97}, it is possible to choose $\chi$ (and hence $g_B$) while leaving $g_T$ unaltered,  in such a way that the mean curvature 1-form $\mu$ is basic and closed. We define $n$ to be that positive vector such that $\chi(n)=1$ (equivalently, $g_B(n,n)=1$), for Dominguez's choice of $g_B$ and $\chi$. Observe that for us it is important that $g_T$ does not change \cite[Cor.\ 4.23]{dominguez98}, as it is the metric induced on the horizon. (Our proof below works using just {{\'A}lvarez L{\'o}pez}'s results \cite{alvarez92}, it is sufficient to work with $\mu_b$ instead of $\mu$.)

 From the null energy condition, $d\omega(n, \cdot) = 0$ and $L_nd\omega = 0$ so $d\omega$ is a basic 2-form which is closed so it defines an element in $H_b^2(\mathcal{F})$. By \cite[Prop.\ III.B.2]{carriere84},   we have for the hyperbolic torus $T^3_A$ endowed with Carri\`ere non-isometric flow, $H_b^2(\mathcal{F}) = 0$. So there exists a basic 1-form $\alpha$ such that  $d\omega = d\alpha$. This means that $\omega - \alpha$ is a closed 1-form. But the hyperbolic torus has Betti number equal to $1$, this is a well known result,  see \cite{funar13} (paragraph before Theorem 1.1 and the last paragraph of the proof of Proposition 7.2), hence $\textrm{dim} H^1(T^3_A)=1$.
 Since the flow is not isometric, $[\mu](=[\mu_b])$ defines a non zero class in $H^1_b(H)$, see \cite[Thm.\ 6.4]{alvarez92}\cite[Sec.\ 2.2]{minguzzi24c}, and hence  in $H^1(H)$ (Remark \ref{crre}). Since the dimension of $H^1(H)$ is $1$, there exists a constant $\lambda$ and a function $f$ such that
 $\omega - \alpha = \lambda\mu + df$. Since $\mu$ and $\alpha$ are basic 1-forms, we have $\alpha (n) = \mu (n) = 0.$ So $\kappa = \omega(n) = n(f).$
 Setting $n^\prime = e^{-f}n$ we get
  $\kappa^\prime = 0$. $\hfill$ \qed
\end{itemize}
\end{proof}

\section{Carri\`ere's neighborhoods and generic dimension} \label{nnnt}

According to Proposition \ref{nncd} we can focus on compact connected {\em orientable} manifolds endowed with a Riemannian flow.  Carri\`ere \cite{carriere84} for this type of structure proved that
the closures of orbits of a Riemannian flow \(\phi\) on a compact  connected orientable manifold are tori, and that restricted to each torus, the flow \(\phi\) is conjugate (without parameter) to a linear flow. He then established the structure of \(\phi\) in a neighborhood of the closure of an orbit \cite[Prop.\ 4]{carriere84}. Due to their useful structure, these neighborhoods are now called {\em Carri\`ere's neighborhoods} \cite{royo01,nozawa14}.

Let \(D^p\) denote the unit Euclidean open ball centered at the origin in \(\mathbb{R}^p\). Let \(\phi\) be a Riemannian flow on a compact connected orientable manifold \(H^{n}\) and \(L\) the closure of an orbit of \(\phi\). The manifold \(L\) is diffeomorphic to \(T^{k+1}\), $k\ge 0$, and

\begin{proposition}[Existence of Carri\`ere's neighborhoods] \label{bbe}
There exists a neighborhood \(V\) of \(L\) saturated by \(\phi\) such that:
\begin{enumerate}
    \item[(i)] \(V\) is diffeomorphic to \(S^1 \times T^k \times D^{n-k-1}\) via a diffeomorphism sending \(L\) to \(S^1 \times T^k \times \{0\}\).
    \item[(ii)] The flow \(\phi\) restricted to \(V\) is conjugate (without parameter) to the flow obtained by suspending a diffeomorphism \(\gamma\) of \(T^k \times D^{n-k-1}\) of the form \(\gamma(x,y) = (B(x), O(y))\), where \(B\) is an irrational translation of \(T^k\) and \(O\) a rotation of \(\mathbb{R}^{n-k-1}\).
\end{enumerate}
\end{proposition}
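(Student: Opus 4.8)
The plan is to combine Molino's structure theory for Riemannian foliations with the explicit description of a minimal linear flow on the toral closure $L$. First I would fix a bundle-like metric $g_B=g_T+\chi\otimes\chi$ adapted to $\phi$, so that $\phi$ preserves the transverse metric $g_T$ and the holonomy transport $\dd\phi_t$ acts by linear isometries on the normal bundle $\nu(L)=TH\vert_L/TL$. By Carri\`ere's theorem, already invoked above, $L\cong T^{k+1}$ and $\phi\vert_L$ is conjugate to a minimal linear flow; the underlying mechanism, which I would extract from Molino's theory, is that this toral closure is an orbit of the \emph{structural Lie algebra} $\mathfrak{g}$ of transverse Killing fields. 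For a Riemannian flow $\mathfrak{g}$ is abelian of dimension $k$, its orbits (saturated by $\phi$) integrating precisely to the closures, which is why these closures are tori.

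Next I would realise $\phi\vert_L$ as a suspension on $L$ itself. Choosing coordinates on $T^{k+1}=\mathbb{R}^{k+1}/\mathbb{Z}^{k+1}$ in which $\phi$ is the constant flow along a rationally independent vector $v=(v_0,\dots,v_k)$ with $v_0\ne 0$, I would take the codimension-one subtorus $T^k=\{\theta_0=0\}$ as a global cross-section; its first-return map is the translation by $(v_1/v_0,\dots,v_k/v_0)$, which is an irrational (minimal) translation $B$ of $T^k$ precisely because $v$ is rationally independent. This presents $(L,\phi\vert_L)$ as the suspension of $B$ and fixes the image $L\mapsto S^1\times T^k\times\{0\}$.

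I would then thicken transversally. Using the normal exponential map of $g_B$ I would identify a $\phi$-saturated tubular neighbourhood $V$ of $L$ with a disk subbundle of $\nu(L)$ of fibre dimension $n-k-1$; restricting over the cross-section $T^k$ gives the model fibre $T^k\times D^{n-k-1}$ and the return map $\gamma$ of $\phi$ to this fibre. On the base factor $\gamma$ is the translation $B$ of the previous step, while on $D^{n-k-1}$ it is a linear isometry $O\in O(n-k-1)$: this is forced by flow-invariance of $g_T$ together with the fact that the normal exponential map intertwines holonomy transport with its linearisation. Orientability of $H$ puts $O$ in $SO(n-k-1)$, so both $B$ and $O$ are isotopic to the identity; the mapping torus of $\gamma=(B,O)$ is then diffeomorphic to the product $S^1\times T^k\times D^{n-k-1}$, giving (i), and $\phi\vert_V$ is conjugate to the suspension of $\gamma$, giving (ii).

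The main obstacle is the product form $\gamma(x,y)=(B(x),O(y))$ with the rotation $O$ \emph{independent} of the base point $x\in T^k$; a priori the normal return $O_x$ could vary over the cross-section. This is exactly where I would lean hardest on Molino's structure theorem: the structural Killing fields furnish a transverse parallelism in a neighbourhood of $L$ whose structure group acts transitively on the cross-section torus $T^k$, and $\phi$ normalises this parallelism. Consequently the normal holonomy $O_x$ is covariantly constant along the orbits of the structural action, hence constant on $T^k$, and in the induced trivialisation $\gamma$ splits as the stated product of an irrational translation with a fixed rotation.
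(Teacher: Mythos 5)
The first thing to note is that the paper itself offers no proof of Proposition \ref{bbe}: it is quoted as Carri\`ere's structure theorem, with an explicit citation to \cite[Prop.\ 4]{carriere84}. So your attempt can only be compared with Carri\`ere's original argument, whose skeleton (bundle-like metric, the subtorus $T^k=\{\theta_0=0\}$ as a cross-section of the minimal linear flow on $L$ with first-return an irrational translation $B$, normal-exponential thickening to a transversal $\Sigma$) you have correctly reproduced. The gap is in your final step, which is exactly the crux. A Riemannian flow is \emph{not} transversally parallelizable on $H$ itself: Molino's transverse parallelism lives on the lifted orthonormal transverse frame bundle, and on $H$ one only has the commuting sheaf, a \emph{locally constant} sheaf of local transverse Killing fields whose triviality (absence of monodromy) near $L$ you have not established. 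Consequently the assertion that ``the normal holonomy $O_x$ is covariantly constant along the orbits of the structural action, hence constant on $T^k$'' is a non sequitur: covariant constancy with respect to some connection yields constancy in a fixed trivialization only if that connection is flat with trivial holonomy over $T^k$, which is precisely what needs to be proved. Moreover you only address the $x$-dependence of the normal return; you never rule out $y$-dependence of the base factor, i.e.\ a priori the return map has the form $(x,y)\mapsto\bigl(B_y(x)+\varepsilon(x,y),\,O_x(y)+\cdots\bigr)$, and the claimed model requires killing both dependences simultaneously.

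The standard repair, which is in substance Carri\`ere's route, is through compactness rather than parallelism: after shrinking the tube, the first-return map $\gamma$ is an isometry of $(\Sigma, g_T\vert_\Sigma)$ preserving the zero torus, and the closure $G=\overline{\{\gamma^m\}}$ in the isometry group is a compact abelian Lie group acting on $\Sigma$ with $T^k\times\{0\}$ as a free orbit (freeness because $\gamma$ restricts there to a minimal translation). Averaging the metric over $G$ and linearizing \`a la Bochner around the invariant torus produces $G$-equivariant coordinates in which every element of $G$, in particular $\gamma$, splits as $(x,y)\mapsto(f(x),Oy)$ with $O$ a fixed rotation: note that for a product flat metric an isometry of the form $(x,y)\mapsto(f(x),P_x y)$ forces $\partial_x P=0$ by a direct computation, which is the honest substitute for your ``covariant constancy.'' This same compact-group argument also closes a second gap you passed over silently: in your third step you assume that the normal disk bundle restricted over the cross-section is the product $T^k\times D^{n-k-1}$, but orientable vector bundles over tori need not be trivial (already over $T^2$ there are oriented rank-2 bundles with nonzero Euler class), and triviality here is part of conclusion (i); it follows from the free transitive action of the torus factor of $G$ on the zero section, not from orientability of $H$.
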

That is, $V$ is diffeomorphic to $[0,1]\times T^k \times D^{n-k-1} / \sim$
\[
(1,x,y) \sim  (0,x+v, O(y))
\]
where $(1,v_1,\cdots, v_k)$ has rationally independent components and $O$ has unit determinant (due to the orientability of $H$ and hence $V$). The flow is given by
\[
s\mapsto (s, x_0, y_0)
\]
and the interesting dynamics comes from the identification.

\begin{remark} \label{nntr}
There is a smooth function $\rho \ge 0$ on $H$ which is positive on $L$, is constant on every orbit (and hence on their closures), and vanishes outside the Carri\`ere's neighborhood $V$. Indeed, by using point (i) it is sufficient to consider a smooth non-negative function $\rho$ on $H$ that is zero outside $V$ and on $V$ is the lift of a function on $D^{n-k-1}$ that depends only on the radial coordinate $r:=\Vert y\Vert$, is equal to one  for $r<1/2$  and zero for $r>1-\epsilon$, $\epsilon<1/2$. As a consequence, by covering $H$ with  the open sets $\{\rho >0\}$ for each $\rho$ so constructed, and extracting a finite subcovering we  get a finite covering of Carri\`ere's neighborhoods  $\{V_i\}$ and the existence of an associated partition of unity $\{\varphi_i\}$,  $\varphi_i=\rho_i/\sum_j \rho_j$. Note that the functions $\varphi_i$ are constant over the orbits as so are the $\rho_i$.
\end{remark}

\begin{proposition}
Let $\kappa$ be smooth function on $H$ and suppose that
for every Carri\`ere's neighborhood $V$ there exists  a vector field with  surface gravity equal to  $\kappa$ on $V$, then there also exists a global field on $H$ of surface gravity equal to $\kappa$.
\end{proposition}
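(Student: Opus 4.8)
The plan is to reduce the entire gluing to a single algebraic observation: for a fixed reference generator, the surface gravity is an $\mathbb{R}$-linear first-order differential operator in the rescaling factor, after which the patching is forced by the orbit-constant partition of unity of Remark \ref{nntr}.

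First I would fix, once and for all, a global future-directed lightlike tangent field $n$ on $H$, with associated 1-form $\omega$ from Eq.\ (\ref{pprt}) and reference surface gravity $\kappa_0 := \omega(n)$. Because the foliation is one-dimensional, every future-directed tangent field is of the form $\lambda n$ with $\lambda \in C^\infty$, $\lambda>0$, and a direct computation
\[
\nabla_{\lambda n}(\lambda n) = \lambda\bigl(n(\lambda) + \kappa_0 \lambda\bigr)\, n
\]
shows that the surface gravity of $\lambda n$ is
\[
S(\lambda) := n(\lambda) + \kappa_0\,\lambda,
\]
which is nothing but the content of Eq.\ (\ref{ren2}) with $\lambda = e^{-f}$. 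The crucial feature, which I would isolate explicitly, is that $S$ is $\mathbb{R}$-\emph{linear} in the single variable $\lambda$.

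Next I would restate the hypothesis in these terms: taking a finite cover by Carri\`ere neighborhoods $\{V_i\}$ (Proposition \ref{bbe}, Remark \ref{nntr}), on each $V_i$ there is a positive $\lambda_i \in C^\infty(V_i)$ with $S(\lambda_i)=\kappa$ on $V_i$. I then invoke the partition of unity $\{\varphi_i\}$ subordinate to $\{V_i\}$ furnished by Remark \ref{nntr}, whose defining property is that each $\varphi_i$ is constant along the orbits, i.e.\ $n(\varphi_i)=0$. Setting $\lambda_* := \sum_i \varphi_i\,\lambda_i$ (each $\varphi_i\lambda_i$ extended by zero outside $V_i$) gives a smooth, strictly positive function on $H$, positivity being automatic since at every point some $\varphi_j>0$ with $\lambda_j>0$. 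The computation is then immediate: using $n(\varphi_i)=0$,
\[
S(\lambda_*) = \sum_i\bigl(n(\varphi_i)\lambda_i + \varphi_i\,n(\lambda_i) + \kappa_0\,\varphi_i\lambda_i\bigr) = \sum_i \varphi_i\,S(\lambda_i) = \sum_i \varphi_i\,\kappa = \kappa,
\]
so that $n_* := \lambda_* n$ is a global future-directed tangent field of surface gravity $\kappa$.

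The main obstacle one must clear is the \emph{apparent} nonlinearity of the surface gravity under rescaling, which would seem to prevent a naive convex combination of the local fields from producing the correct global value. This is dissolved by the two observations above: that as an operator in the scalar $\lambda$ the map $S$ is linear, and that the partition of unity can be taken constant along orbits so that the first-order contribution $n(\varphi_i)$ drops out. Together they yield the identity $S\bigl(\sum_i\varphi_i\lambda_i\bigr)=\sum_i\varphi_i\,S(\lambda_i)$, which is precisely the compatibility the gluing requires; everything else (positivity, smoothness, finiteness of the cover) is routine.
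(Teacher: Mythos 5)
Your proof is correct and takes essentially the same approach as the paper: your glued field $\lambda_* n=\sum_i \varphi_i \lambda_i\, n$ is exactly the paper's $\sum_i \varphi_i n_i$, and both arguments turn on the same key fact, the orbit-constancy $n(\varphi_i)=0$ of the partition of unity from Remark \ref{nntr}. Your scalar reformulation via the linear operator $S(\lambda)=n(\lambda)+\kappa_0\lambda$ simply repackages the paper's trace computation of $\nabla\bigl(\sum_i \varphi_i n_i\bigr)$ (and, as a minor bonus, avoids the paper's formal use of $\dd \log \varphi_i$ at points where $\varphi_i$ vanishes).
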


We shall only be interested in the case $\kappa=0$ (zero surface gravity).

\begin{proof}
Indeed, if such $n_i$ exists for each element of the covering $V_i$, then $\nabla n_i=n_i\otimes \omega_i$ for some 1-form $\omega_i: V_i \to T^*V_i$, $\kappa_i=\textrm{tr}(n_i\otimes \omega_i)=\omega_i(n_i)=\kappa$, but then, defining $n:=\sum_i \varphi_i n_i$
\[
\nabla n=\sum_i \varphi_i n_i \otimes[\omega_i + \dd \log \varphi_i]
\]
from which we get
\[
\kappa=\textrm{tr}(\nabla n)=\sum_i \varphi_i [\kappa_i+ \p_{n_i}(\log \varphi_i)]=\sum_i \varphi_i \kappa=\kappa
\]
where we used the fact that the functions of the partition of unity $\varphi_i$ are constant over the orbits. $\hfill$ \qed
\end{proof}

The same result can be expressed in terms of solutions to the cohomological equation.

\begin{proposition} \label{mots}
Let $H$ be a compact  connected orientable manifold endowed with a Riemannian flow generated by a smooth vector field $n$. Let $\kappa$ be a smooth function on $H$. If the cohomological equation $n(f)=\kappa$ admits solutions over every Carri\`ere's neighborhood $V$ then it also admits a global solution.
\end{proposition}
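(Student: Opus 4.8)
The plan is to mimic the gluing argument of the preceding proposition, now phrased directly for the cohomological equation, and to exploit the fact that the partition of unity produced in Remark \ref{nntr} is adapted to the flow in exactly the way required. First I would invoke that remark: by compactness of $H$ and the construction given there, one extracts a finite covering $\{V_i\}_{i=1}^m$ of $H$ by Carri\`ere's neighborhoods together with a subordinate partition of unity $\{\varphi_i\}$ satisfying $\textrm{supp}\,\varphi_i \subset V_i$, $\sum_i \varphi_i \equiv 1$, and---crucially---each $\varphi_i$ constant along the orbits, i.e.\ $n(\varphi_i)=0$. This last property is what makes the whole scheme work and is precisely what the product structure of the Carri\`ere neighborhoods guarantees.

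By hypothesis, for each $i$ there is a smooth function $f_i$ on $V_i$ solving $n(f_i)=\kappa$ on $V_i$. Since $\textrm{supp}\,\varphi_i \subset V_i$, the product $\varphi_i f_i$ extends by zero to a globally defined smooth function on $H$. I then set $f:=\sum_i \varphi_i f_i$, which is a genuine smooth function on all of $H$.

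The verification is a one-line Leibniz computation. Using $n(\varphi_i)=0$ and the derivation property of $n$,
\[
n(\varphi_i f_i) = n(\varphi_i)\, f_i + \varphi_i\, n(f_i) = \varphi_i \kappa
\]
on $V_i$, and this identity persists on all of $H$, since both sides vanish off $\textrm{supp}\,\varphi_i$. Summing over $i$ and using $\sum_i \varphi_i = 1$ yields $n(f)=\sum_i \varphi_i \kappa = \kappa$, as desired.

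The only point requiring care---and the nearest thing to an obstacle---is that $n(\varphi_i f_i)=\varphi_i\kappa$ must hold \emph{globally} rather than merely on $V_i$, and that $\varphi_i f_i$ be smooth across the boundary of $V_i$; both follow at once from the support condition on $\varphi_i$ together with $n(\varphi_i)=0$. The genuinely nontrivial input is therefore entirely upstream, in the existence of a flow-invariant ($n$-constant) partition of unity subordinate to a finite cover by Carri\`ere neighborhoods, which Remark \ref{nntr} supplies. Everything else is the standard partition-of-unity globalization, and indeed the argument is the exact counterpart of the one used in the previous proposition for surface gravity.
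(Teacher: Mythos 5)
Your proposal is correct and is essentially identical to the paper's own proof: both glue the local solutions via the flow-invariant partition of unity $\{\varphi_i\}$ from Remark \ref{nntr}, setting $f=\sum_i \varphi_i f_i$ and using $n(\varphi_i)=0$ in the Leibniz rule. Your additional remarks on the support condition (ensuring $\varphi_i f_i$ extends smoothly by zero and the identity $n(\varphi_i f_i)=\varphi_i\kappa$ holds globally) merely make explicit what the paper leaves implicit.
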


Again, we shall be mostly interested in the case $\kappa=0$.

\begin{proof}
Indeed, let us choose a finite covering $\{V_i\}$. From $n(f_i)=\kappa$ we have multiplying by $\varphi_i$, using $n(\varphi_i)=0$ and summing, $n(f)=\kappa$, where $f=\sum_i\varphi_i f_i$. $\hfill$ \qed
\end{proof}

\begin{theorem}
\label{Birt}
Let $H$ be a compact connected orientable manifold endowed with a Riemannian flow generated by a smooth vector field $n$, and let $\omega$ be a 1-form on $H$ which satisfies $i_n\dd \omega=0$. Suppose that the equivalent properties  of Theorem \ref{Birkm} are satisfied.
Then there is a smooth function $f$ such that $n(f)=\omega(n)$, and hence the 1-form $\omega-\dd f$ is basic.
\end{theorem}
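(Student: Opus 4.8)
The plan is to localize the problem to Carri\`ere's neighborhoods and to solve the cohomological equation there by passing to an auxiliary \emph{compact} isometric suspension on which Theorem \ref{mxrt} applies. By Proposition \ref{mots} it suffices to exhibit, on each Carri\`ere neighborhood $V$ of a chosen finite covering, a smooth $f$ with $n(f)=\kappa$, where $\kappa:=\omega(n)$. Fix such a $V$. By Proposition \ref{bbe}, $V$ is the suspension of $\gamma(x,y)=(B(x),O(y))$ over the open fiber $T^k\times D^{n-k-1}$, with $B$ an irrational translation and $O\in SO(n-k-1)$; its natural generator $\p_s$ is Killing for the product (mapping-torus) metric, so $V$ is an isometric Riemannian flow. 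Since the cohomological equation and the hypotheses are invariant under positive rescaling of the generator (Remark \ref{cnnqa}), I may work throughout with $n=\p_s$. (If $n-k-1=0$ then $V=L$ is already a compact torus and Theorem \ref{mxrt} applies directly.) The only obstruction to applying the compact theory is non-compactness, caused by the open ball factor.

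To remove it I would compactify the fiber. Regarding $\mathbb{R}^{n-k-1}$ as $S^{n-k-1}$ minus a point via inverse stereographic projection, the linear rotation $O$ extends to the isometry $\mathrm{diag}(O,1)$ of the round sphere, fixing both poles and mapping the open unit ball onto an open $O$-invariant cap. Suspending $\bar\gamma(x,y)=(B(x),O(y))$ over the closed fiber $T^k\times S^{n-k-1}$ yields a compact, connected, orientable manifold $W$ carrying an isometric Riemannian flow with Killing generator $n=\p_s$, and $V$ embeds into $W$ as an open saturated set. The crux — and the step I expect to be hardest — is to extend $\omega$ to a $1$-form $\hat\omega$ on all of $W$, equal to $\omega$ on $V$ and still satisfying $i_n\dd\hat\omega=0$. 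A naive basic cutoff $\psi\omega$ fails, since $i_n\dd(\psi\omega)=-\kappa\,\dd\psi$; instead I would exploit the suspension structure. Writing $\omega=\kappa\,\dd s+\beta_s$ in suspension coordinates, the condition $i_n\dd\omega=0$ becomes the transport law $\p_s\beta_s=\dd_{\bar F}\kappa$. One can therefore extend the function $\kappa$ smoothly across the sphere fiber (making it, say, constant near the complementary core torus $T^{k+1}$ at the far pole) and then \emph{define} $\hat\omega$ by integrating this transport law from a chosen fiber datum $\beta_0$, which guarantees $i_n\dd\hat\omega=0$ by construction. The remaining work is to choose $\kappa$ and $\beta_0$ so that $\hat\omega$ descends to $W$, i.e.\ so that $\bar\gamma$-periodicity holds; this is a cohomological compatibility condition on the sphere fiber that can be met using the vanishing-average hypothesis (the properties of Theorem \ref{Birkm}).

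With such an $\hat\omega$, Theorem \ref{mxrt} applied to the compact manifold $W$ with Killing field $n$ produces a smooth $\hat f$ and a constant $c$ with $n(\hat f)=\hat\omega(n)-c$; on $V$ this reads $n(\hat f)=\kappa-c$. To see $c=0$, pick an orbit $x(\tau)$ whose closure is the core torus $L\subset V$, so that it stays in $V$ for all time and $\omega(\dot x)=\kappa$ along it. Since $\hat f$ is bounded on the compact manifold $W$, dividing $\hat f(x(t))-\hat f(x(0))=\int_0^t\kappa\,\dd\tau-ct$ by $t$ and letting $t\to\infty$ gives $c=\lim_{t\to\infty}\tfrac1t\int_0^t\kappa\,\dd\tau=0$, the last equality being property (a) of Theorem \ref{Birkm}, which holds along every orbit of $H$ by hypothesis. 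Hence $f:=\hat f|_V$ solves $n(f)=\kappa$ on $V$, and Proposition \ref{mots} globalizes the solution to $H$. Finally $\omega-\dd f$ is basic, because $i_n(\omega-\dd f)=\kappa-n(f)=0$ and $i_n\dd(\omega-\dd f)=i_n\dd\omega=0$.
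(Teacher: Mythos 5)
Your overall architecture is exactly the paper's: localize via Proposition~\ref{mots} to Carri\`ere neighborhoods, compactify the disk factor to a sphere so the suspension becomes a compact connected orientable isometric flow, apply Theorem~\ref{mxrt} there, force $c=0$ by integrating $n(\hat f)=\kappa-c$ along an orbit and using boundedness of $\hat f$ together with property (a) of Theorem~\ref{Birkm}, then globalize. Those steps are sound as you present them (one small caution: Remark~\ref{cnnqa} compares globally defined generators on the \emph{compact} manifold $H$, while $\p_s$ lives only on $V$; the paper extends $\p_s\vert_{V_{1/2}}$ to a global field $n''$ before invoking it, whereas in your argument it suffices that your chosen orbit stays in the compact core torus $L$, where the two parametrizations are comparable). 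The genuine gap is in the one step you yourself flag as hardest: the extension of $\omega$ to the compactified suspension keeping $i_n\dd\hat\omega=0$. You correctly reduce descent of the transported form to the single fiber condition $\hat\beta_0-(\bar\gamma^{-1})^*\hat\beta_0=-\dd\hat K$, with $\hat K$ extending $K_V:=\int_0^1\kappa_s\,\dd s$, but you then assert this ``can be met using the vanishing-average hypothesis.'' That hypothesis is irrelevant here: it concerns Birkhoff averages of $\kappa$ along orbits and is used only to kill the constant $c$ at the end. It gives no control over the twisted equation $(I-(\bar\gamma^{-1})^*)u=\eta$ over the holonomy $\bar\gamma=(B,O)$, and since $B$ is merely an irrational (not Diophantine) translation, that equation for general right-hand sides is precisely the small-divisor problem the whole paper is built to avoid. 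As written, the crucial step is unproven and your stated mechanism could not prove it.

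The gap is repairable, but by topology rather than ergodicity. Extend $\beta_0$ arbitrarily (by a cutoff) to a smooth 1-form $b$ on $T^k\times S^{n-k-1}$ agreeing with $\beta_0$ for $r\le 1-\epsilon$. Since $\bar\gamma=(B,O)$ is isotopic to the identity, it acts trivially on $H^1_{\mathrm{dR}}$ of the closed fiber, so $b-(\bar\gamma^{-1})^*b$ is exact, say equal to $\dd G$; on the connected region $r\le 1-\epsilon$ one has $b-(\bar\gamma^{-1})^*b=\beta_0-(\bar\gamma^{-1})^*\beta_0=-\dd K_V$ (the rotation preserves $r$), whence $G=-K_V$ there after normalizing a constant. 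Taking $\hat K:=-G$ and building $\hat\kappa$ with fiberwise $s$-integral $\hat K$ (e.g.\ $\hat\kappa:=\rho\kappa+\chi(s)(\hat K-\rho_F K_V)$ with $\chi$ a bump of unit integral supported away from the seam, noting $\hat K-\rho_F K_V$ vanishes for $r\le 1/2$) makes your transport construction close up: $\hat\beta_s:=b+\int_0^s\dd\hat\kappa_u\,\dd u$ satisfies $\hat\beta_1=(\bar\gamma^{-1})^*\hat\beta_0$ and all $s$-jets match across the seam. For comparison, the paper takes the shorter route of cutting off the form itself, $\omega':=\rho\omega$, asserting $i_n\dd\omega'=0$ from $n(\rho)=0$; your computation $i_n\dd(\psi\omega)=-\kappa\,\dd\psi$ for basic $\psi$ is correct and shows that this cutoff identity drops the term $-\omega(n)\,\dd\rho$, so the extension step needs exactly the kind of care you identified — your transport-law formulation, completed by the $H^1$-triviality argument above rather than by the averaging hypothesis, is the right way to carry it out.
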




\begin{proof}
By Prop.\ \ref{mots} for $\kappa=\omega(n)$, it is sufficient to solve the same problem on a Carri\`ere's neighborhood $V$. We shall omit the diffeomorphism from $V$ to $[0,1]\times T^k\times D^{n-k-1}/\!\sim$, understanding that $\omega$ and $n$ can be transferred with the diffeomorphism if needed. At no step of the argument we shall need the metric $g_T$ (which might be non-unique).

We denote with $n$ the original vector field and with $n'=\frac{\p}{\p s}$ the natural vector field of the suspension where $s$ is the parameter of $[0,1]$. This vector field is defined just on $V$.
However, its restriction to $V_{1/2}:=[0,1]\times T^k\times D_{1/2}^{n-k-1}/\sim$, where $D_{1/2}^{n-k-1}$ is the open ball of radius 1/2 of $\mathbb{R}^{n-k-1}$, can be easily extended to the whole $H$ (e.g.\ by taking a convex combination with another global choice, making use of function $\rho$, c.f.\ Remark \ref{nntr}). We denote this extension with $n''$. We recall that the properties of Theorem \ref{Birkm} do not depend on the vector field (Remark \ref{cnnqa}), thus they hold also using $n''$ which implies $\lim_{s\to \infty}\frac{1}{s}\int_0^s \omega(\dot x(s)) \dd s=0$, where $x$ is any orbit on $V_{1/2}$ parametrized with the natural suspension parameter $s$.

 Let $L$ be the closure of orbit corresponding to $r=0$. Let $\rho$ be the function on the ball $D^{n-k-1}$ constructed in Remark \ref{nntr}. We already observed that the region $r<1/2$, $r:=\Vert y \Vert$, provides another Carri\`ere's neighborhood of $L$ denoted $V_{1/2}$ as it is the suspension with base $T^k\times D_{1/2}^{n-k-1}$. The 1-form $\omega':=\rho\omega$ vanishes for $r>1-\epsilon$ and is coincident with $\omega$ for $r<1/2$ hence in $V_{1/2}$. Moreover, since $\rho$ is constant over every orbit in $V$, $i_n\dd \omega'=(i_n\dd \rho) \omega+\rho i_n\dd \omega=0$.

We are going to extend the manifold $[0,1]_s\times T^k\times D^{n-k-1}/\sim$ to a compact manifold via a compactification of the factor $D^{n-k-1}$. This will produce a new suspension and hence an extension of the Riemannian flow to the compactified space. We proceed as follows. We regard $D^{n-k-1}$ as a subset of $\mathbb{R}^{n-k-1}$ with the metric given by $\Omega(r)^2 g_{\mathbb{E}}$ where $\Omega(r)= \frac{2}{1+r^2}$, $r=\Vert y\Vert$, for $r\ge 2$ and $\Omega=1$ for $r \le 1$. The factor is that appearing in the stereographic projection. As a result we can add a north pole so embedding $D^{n-k-1}$ into a sphere $S^{n-k-1}_{\textrm{def}}$ where ``def'' reminds us that the metric has been deformed near the south pole to be flat, that is, the south pole has a ball neighborhood isometric to $D^{n-k-1}$. Now we consider a suspension $N= [0,1]\times T^k \times S^{n-k-1}_{\textrm{def}}/\sim$
with base $T^k \times S^{n-k-1}_{\textrm{def}}$. The map of the identification is best expressed using the coordinate of the Euclidean space, that is, using the stereographic projection, as follows (the north pole is sent to itself)
\[
(1,x,y) \sim  (0,x+v, O(y)).
\]
Formally, the map does not change, and it is still an isometry because the rotation preserves the function $\Omega$. This isometric property implies that it induces a Riemannian flow with tangent field $\p/\p s$ which we keep denoting $n'$. In conclusion, the Carri\`ere neighborhood can be regarded as a subset of a compact connected  manifold $N$ which is obtained via a  suspension. Coming from a suspension, the flow is isometric, in fact  the direct sum of metrics $h=\dd t^2\oplus g_{\mathbb{T}^k}\oplus g_{S^{n-k-1}_{\textrm{def}}}$ is such that $L_n h=0$ (see the discussion in \cite{minguzzi24c}). Furthermore, the orientability of $T^k \times S^{n-k-1}_{\textrm{def}}$ and the fact that $O$ has unit determinant implies that $N$ is orientable. Now, we extend $\omega$ further, defining $\omega'=0$ in $N\backslash V$. Then, $N$, $n'$, $h$, $\omega'$, satisfy the assumptions of Theorem \ref{mxrt} thus there are a smooth function $f: N \to \mathbb{R}$ and a constant $c$ such that $n'(f)=\omega'(n')-c$. We denote with $\check f$ the restriction to $V_{1/2}$. As there  $\omega'=\omega$ we get  $n'(\check f)=\omega(n')-c$. Integrating this equation and using $\lim_{s\to \infty}\frac{1}{s}\int_0^s \omega(\dot x(s)) \dd s=0$ and $n'=\frac{\dd }{\dd s}$ we get $c=0$ (because $f$,  and hence $\check f$, is bounded as it is a continuous function over the compact set $N$). Using the linearity with respect to $n$ we arrive at $n(\check f)=\omega(n)$ on $V_{1/2}$. As the equation can be solved over Carri\`ere neighborhoods that cover the whole $H$, by passing to a finite covering we get a global solution on $H$. $\hfill$ \qed
\end{proof}

We are ready to prove the main theorem

\begin{proof}[{\bf Proof of Theorem \ref{jdot}}]
Let  $(M,g)$ be a smooth spacetime (of any dimension) and assume the null energy condition. Let $H$ be a  compact degenerate horizon, we want to prove that there exists  a smooth future-directed lightlike tangent vector field $n$ of zero surface gravity, $\omega(n)=0$, where $\nabla_X n=\omega(X) n$.

By the null energy condition $i_n \dd \omega=0$ (Cor.\ \ref{cooe}).
By Prop.\ \ref{nncd} we can assume, without loss of generality, that $H$ is orientable. By Theorem \ref{Birk app} the equivalent properties of Theorem \ref{Birkm} are satisfied. By Theorem \ref{Birt} there is a smooth function such that $n(f)=\omega(n)$, thus by Eq.\ (\ref{ren2}) $n'=e^{-f}n$ has zero surface gravity. $\hfill$ \qed
\end{proof}
%
%
%
%

\section{Cohomological proof in generic dimension} \label{mnlc}

We provide an entirely cohomological proof of the main Theorem \ref{jdot}, building on several sophisticated results from Riemannian foliation theory. It is worth emphasizing that while in the previous section we handled the possibly non-isometric case through a relatively simple geometric covering argument, the results we now employ draw upon the full machinery of harmonic function theory — although this technical depth remains largely concealed behind the cohomological statements we invoke (see, for instance, the proof of \cite[Lemma 3.4]{schliebner15}).

Our argument relies on Royo Prieto's generalization to Riemannian flows of the Gysin sequences relating de Rham and basic cohomology \cite[Thm. 3.2]{royo01b} (see also \cite[Thm. 1.42]{schliebner15}, \cite[Sec. 3]{larz10}), together with several other cohomological facts.

\begin{theorem}
\label{tirt}
Let $H$ be a compact connected orientable manifold endowed with a Riemannian flow generated by a smooth vector field $n$, and let $\omega$ be a 1-form on $H$ which satisfies $i_n\dd \omega=0$. If the flow is not isometric then  there is a smooth function $f$ such that $n(f)=\omega(n)$, and hence the 1-form $\omega-\dd f$ is basic.
\end{theorem}

\begin{proof}
By a result due to Molino and Sergiescu  \cite{molino85} for every non-isometric Riemannian flows $H^{n-1}_b(H)=0$.
By \cite[Lemma 3.4]{schliebner15} we have for every non-isometric Riemannian flows $H^1_{dR}(H)=H_b^1(H)$.

Another way to see this is as follows. Since the flow is not isometric,  the Alvarez class $[\mu_b]$ defines a non zero class in $H^1_b$ (we omit the manifold), see \cite[Thm.\ 6.4]{alvarez92}\cite[Sec.\ 2.2]{minguzzi24c}, and hence  in $H^1$ (Remark \ref{crre}). By \cite[Rem.\ 7.7, Thm.\ 8.4]{habib24} we have $H^1_{dR}=H_b^1$.

As in the proof of \cite[Lemma 3.4]{schliebner15}, the Gysin sequence returns (see e.g.\  \cite[Eq.\ (3.2)]{schliebner15} noting that \(H_{k}^{0}\) is trivial since it is isomorphic to \(H_{b}^{n-1}\)), the exact sequence
\[
0 \to H_{b}^{1} \to H_{dR}^{1} \to 0 \to H_{b}^{2} \to H_{dR}^{2} \ .
\]
The last part of this sequence shows that \(H_{b}^{2}\) injects in \(H_{dR}^{2}\); this means that if \(\alpha\) is a basic \(2\)-form and \([\alpha]_{dR} = 0\) in \(H_{dR}^{2}\) (i.e., there exists a \(1\)-form \(\theta\) not necessarily basic such that \(\alpha = d\theta\)), then \([\alpha]_{b} = 0\) also in \(H_{b}^{2}\); that is, there exists a basic \(1\)-form \(\beta\) such that \(\alpha = d\beta\).

We apply this to \(d\omega\). The \(2\)-form \(d\omega\) is basic so belongs to \(H_{b}^{2}\) and it is an exact form so \([d\omega]_{dR} = 0\) in \(H_{dR}^{2}\). According to the above remark, we have \([d\omega]_{b} = 0\) in \(H_{b}^{2}\). So there exists a basic \(1\)-form \(\beta\) such that \(d\omega = d\beta\). It follows that \(d(\omega - \beta) = 0\) and \(\omega - \beta\) belongs to \(H_{dR}^{1}\). Due to the isomorphism between \(H_{b}^{1}\) and \(H_{dR}^{1}\), there exists a basic \(1\)-form \(\alpha\) such that \([\alpha]_{dR} = [\omega - \beta]_{dR}\). So, there exists a function \(f\) such that \(\omega - \beta = \alpha + df\). As \(\beta\) and \(\alpha\) are basic, we get \(\omega (n) = n(f)\).  $\hfill$ \qed
\end{proof}

We are ready to give the second proof of the main theorem

\begin{proof}[{\bf Proof of Theorem \ref{jdot}}]
Let  $(M,g)$ be a smooth spacetime (of any dimension) and assume the null energy condition. Let $H$ be a  compact degenerate horizon, we want to prove that there exists  a smooth future-directed lightlike tangent vector field $n$ of zero surface gravity, $\omega(n)=0$, where $\nabla_X n=\omega(X) n$.

By the null energy condition $i_n \dd \omega=0$ (Cor.\ \ref{cooe}).
By Prop.\ \ref{nncd} we can assume, without loss of generality, that $H$ is orientable. If the Riemannian flow is isometric we get the desired result thanks to Theorem \ref{mcrt}. If the Riemannian flow is not isometric, by Thm.\ \ref{tirt}, there is a smooth function $f$ such that $n(f)=\omega(n)$, thus by Eq.\ (\ref{ren2}) $n'=e^{-f}n$ has zero surface gravity. $\hfill$ \qed
\end{proof}

\section{Further results in the torus case} \label{moetr}

For $H=T^s$ and a flow conjugate to a linear flow with dense orbits we can say something more (in spacetime dimension 4, $s=3$, this corresponds to case (iv) of Theorem \ref{car}).

\begin{theorem} \label{ccoae}
Let $T^s$, $s\ge 2$, be a $s$-dimensional torus and $\omega$ a smooth 1-form on $T^s$. Suppose there exists a vector $n =(n_1,\cdots, n_s)\in \mathbb{R}^s$ with constant, rationally independent components (with respect to the affine structure of the covering $\mathbb{R}^s$) such that the exterior derivative of $\omega$ is annihilated by $n$, i.e., $i_n(d\omega) = 0$. Then, $\omega=\eta+\dd f$ where $f$ is a smooth function and $\eta$ is a constant 1-form. In particular, the exterior derivative of $\omega$ vanishes, i.e., $d\omega = 0$.
\end{theorem}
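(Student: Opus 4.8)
The plan is to work entirely with Fourier analysis on the flat torus $T^s = \mathbb{R}^s/\mathbb{Z}^s$, writing $\omega = \sum_{l=1}^s \omega_l\, \dd x^l$ with each coefficient expanded as $\omega_l(x) = \sum_{m\in\mathbb{Z}^s} \hat\omega_l(m)\, \eu^{2\pi i\, m\cdot x}$, and similarly for the scalar $\kappa := \omega(n) = \sum_k n_k\omega_k$. First I would translate the hypothesis into Fourier data. A direct computation of the contraction gives, for each index $l$, $(i_n\dd\omega)_l = n(\omega_l) - \p_l(\omega(n))$, so the assumption $i_n\dd\omega = 0$ is equivalent to the system $n(\omega_l) = \p_l\kappa$ for all $l$. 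Since $n$ acts on the mode $\eu^{2\pi i\, m\cdot x}$ by multiplication by $2\pi i\,(n\cdot m)$ and $\p_l$ by $2\pi i\, m_l$, this becomes the purely algebraic relation
\[
(n\cdot m)\,\hat\omega_l(m) = m_l\,\hat\kappa(m), \qquad m\in\mathbb{Z}^s,\ 1\le l\le s.
\]

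The next step uses the rational independence of the components of $n$: for $m\neq 0$ one has $n\cdot m = \sum_k n_k m_k \neq 0$, precisely because a vanishing such sum with integer $m\neq 0$ would be a rational relation among the $n_k$. The mode $m=0$ carries no information in the displayed identity, and its coefficients $\eta_l := \hat\omega_l(0)$ (the averages of the $\omega_l$) assemble into the constant $1$-form $\eta := \sum_l \eta_l\, \dd x^l$, which is the candidate closed part. For $m\neq 0$, comparing the $l$ and $l'$ equations and cancelling the nonzero factor $n\cdot m$ yields the compatibility condition $m_{l'}\hat\omega_l(m) = m_l\hat\omega_{l'}(m)$, which is exactly what is needed for $\omega - \eta$ to be exact at the level of each Fourier mode.

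I would then define the primitive $f$ through its Fourier coefficients by setting $\hat f(0) = 0$ and, for each $m\neq 0$, choosing any index $l$ with $m_l\neq 0$ and putting $\hat f(m) := \hat\omega_l(m)/(2\pi i\, m_l)$. The compatibility condition guarantees this is independent of the chosen $l$, and a short check shows $2\pi i\, m_l \hat f(m) = \hat\omega_l(m)$ for every $l$, i.e.\ $\widehat{\p_l f}(m) = \hat\omega_l(m)$ for $m\neq 0$ and trivially for $m=0$ after subtracting $\eta$. Consequently $\omega - \eta - \dd f$ has all Fourier coefficients equal to zero, so $\omega = \eta + \dd f$, and then $\dd\omega = \dd\eta + \dd^2 f = 0$ because $\eta$ is constant.

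The point that deserves the most care — and where the argument could otherwise break down — is the smoothness of $f$. If one instead solved directly via $\hat f(m) = \hat\kappa(m)/(2\pi i\, n\cdot m)$, the denominators $n\cdot m$ could be arbitrarily small (the classical small-divisors obstruction), and smoothness would then demand a Diophantine hypothesis on $n$. The crucial observation that avoids this is that one may divide by the integer $m_l$ instead: choosing $l$ with $|m_l|\ge 1$ gives $|\hat f(m)| \le |\hat\omega_l(m)|/(2\pi)$, and since each $\omega_l$ is smooth its Fourier coefficients decay faster than any inverse polynomial; hence so do the $\hat f(m)$, and $f$ is smooth. Thus the hypothesis $i_n\dd\omega = 0$ is exactly strong enough to render the small divisors irrelevant, so that rational independence of the components of $n$ suffices and no Diophantine condition is needed.
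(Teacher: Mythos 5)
Your proposal is correct and takes essentially the same route as the paper: both translate $i_n\dd\omega=0$ into the mode-wise relation $(n\cdot m)\,\hat\omega_l(m)=m_l\,\hat\kappa(m)$, use rational independence only to get $n\cdot m\neq 0$ for $m\neq 0$ so that the coefficient vector $\hat\omega(m)$ is proportional to $m$, and construct the primitive by dividing by the integer components $m_l$ rather than by $n\cdot m$ — your $\hat f(m)$ is exactly the paper's proportionality constant $c(k)$, and both arguments thereby sidestep small divisors without any Diophantine condition. The only cosmetic difference is in how rapid decay is verified: the paper reads it off the norm identity $\vert\hat\omega(k)\vert=2\pi\vert c(k)\vert\,\vert k\vert$ with $\vert k\vert\ge 1$, while you bound $\vert\hat f(m)\vert\le\vert\hat\omega_l(m)\vert/(2\pi)$ by choosing $l$ with $\vert m_l\vert\ge 1$.
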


\begin{proof}
Throughout the proof we use Einstein's summation convention. Indices $p,q,r$ run from $1$ to $s$ (while $i$ is reserved for the unit imaginary number).

Let $\omega$ be a smooth 1-form on $T^s = \mathbb{R}^s / \mathbb{Z}^s$. We lift $\omega$ to a smooth 1-form on $\mathbb{R}^s$, which we also denote by $\omega$. Due to the identification $T^s = \mathbb{R}^s / \mathbb{Z}^s$, the coefficients of $\omega$ in the standard coordinates $x=(x_1,\cdots, x_s)$ of $\mathbb{R}^s$ must be periodic with period 1 in each variable. We write $\omega = \omega_p(x) \dd x_p $, where $\omega_p$ are smooth $\mathbb{Z}^s$-periodic functions.

We represent the $\omega_p$ by their Fourier series:
$$
\omega_p(x) = \sum_{k \in \mathbb{Z}^s} \widehat{\omega_p}(k) e^{2\pi i k \cdot x},
$$
where $k = (k_1, \cdots, k_s) \in \mathbb{Z}^s$.

The exterior derivative $d\omega$ is given by:
$$
d\omega = (\p_{x_p} \omega_q) \dd x_p \wedge  \dd x_q =\Omega_{pq}(x)  \frac{1}{2} \dd x_p \wedge  \dd x_q
$$
where we defined $\Omega_{pq}:=\p_{x_p} \omega_q-\p_{x_q} \omega_p$.

The Fourier coefficients of these components are:
\begin{align*}
\widehat{\Omega_{pq}}(k) &= 2\pi i (k_p \widehat{\omega_q}(k) - k_q \widehat{\omega_p}(k)),  \\
\Omega_{pq}(x)& = \sum_{k \in \mathbb{Z}^s} \widehat{\Omega_{pq}}(k) e^{2\pi i k \cdot x},
 \end{align*}

The vector field $n$ is given by $n = n_p \partial_{x_p}$. The interior product $i_n(d\omega)$ is calculated as $i_n(d\omega) = n_p \Omega_{pq} \dd x_q$, thus the condition $i_n(d\omega) = 0$ implies:
\begin{align}
n_p \Omega_{pq} &= 0 \label{eq:swap3}
\end{align}
These equations must hold for each Fourier mode $k \in \mathbb{Z}^s$.
Substituting the Fourier coefficients $\widehat{\Omega_{pq}}(k)$
\begin{align}
n_p \widehat{\Omega_{pq}}(k) &= 0 \label{eq:swap3b}
\end{align}
which read for every $k\in \mathbb{Z}^s$
\[
(n \cdot k) \widehat{\omega_q}(k) =  [ \widehat{\omega}(k)\cdot n] k_q
\]

The vector $n$ has rationally independent components. This means that for any integer vector $k \in \mathbb{Z}^s$, the equation $k \cdot n = 0$ holds if and only if $k = 0$.
Therefore, for any $k \in \mathbb{Z}^s \setminus \{0\}$, we must have $k \cdot n \neq 0$ and hence
\[
\widehat{\omega_q}(k) =  2 \pi i c(k) k_q
\]
for some complex constant $c(k)$ independent of $q$. Since the  functions $\omega_q$ are smooth for each $N>0$, $\vert \widehat{\omega_q}(k)\vert$ decays faster than $1/\vert k\vert^N$, thus the same is true for $\sqrt{\overline{\hat \omega}\cdot \hat \omega}=2 \pi \vert c(k)\vert \vert k\vert$ and hence for $\vert c(k)\vert$ which implies that  the function  (Riesz-Fischer theorem)
\[
f(x):=  \sum_{0\ne k \in \mathbb{Z}^s} c(k) e^{2\pi i k \cdot x}
\]
is well defined and smooth. Then
\[
\omega_p(x) = \sum_{k \in \mathbb{Z}^s} \widehat{\omega_p}(k) e^{2\pi i k \cdot x}=\widehat{\omega_p}(0)+ \p_{x_p} f ,
\]
which means that
\[
\omega=\eta+\dd f
\]
 where $\eta:=\widehat{\omega_p}(0)\dd x_p$ is a constant 1-form and $\dd f$ is an exact form. In particular,
 \[
 \dd \omega=0.
 \eqno \qed
 \]
\end{proof}

\begin{example}
The theorem does not generalize to (trivial) 2-torus bundles over $S^1$, thus confirming necessity of the density of orbits. It is sufficient to consider \(H = T^2 \times S^1\) with coordinates \((x, y, t)\), where \((x, y)\) are periodic coordinates on \(T^2 = \mathbb{R}^2 / \mathbb{Z}^2\) and \(t \in \mathbb{R}/\mathbb{Z}\) (periodic with period 1). Let $n = \frac{\p}{\p x} + \sqrt{2} \frac{\p}{\p y}$ and $\omega = \sin(2\pi t) \left(-\sqrt{2}  dx + dy\right)$. Then $i_n\omega=0$, $i_n d \omega=0$, but $\dd \omega\ne 0$.

\end{example}

\begin{theorem}
Let  $(M,g)$ be a smooth spacetime (of any dimension) and assume the null energy condition. Let $H$ be a  compact degenerate horizon, and suppose that it is diffeomorphic to the torus $T^s$, the Riemannian flow being conjugate (without parameter) to a linear flow on the torus with dense orbits. Then there exists  a smooth future-directed lightlike tangent vector field $n$ of zero surface gravity, namely geodesic, and moreover with $\omega$ defined as in Eq.\ (\ref{pprt}), we have $\dd \omega=0$. That is, $\omega \in H^1_b(H)$.
\end{theorem}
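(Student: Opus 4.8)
The plan is to combine Theorem~\ref{jdot}, which already supplies a geodesic generator, with the Fourier-analytic vanishing result of Theorem~\ref{ccoae}; the only genuinely new content is the closedness $\dd\omega=0$, from which membership in $H^1_b(H)$ follows immediately. I would begin by invoking Theorem~\ref{jdot} to fix a smooth future-directed lightlike tangent field $n$ with $\omega(n)=0$, where $\nabla_X n=\omega(X)n$; this settles the geodesic (zero surface gravity) part of the statement. By the dominant energy condition and Proposition~\ref{cooe} we have $i_n\dd\omega=0$.

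The heart of the argument is to show $\dd\omega=0$. Since the Riemannian flow is conjugate without parameter to a linear flow with dense orbits, there is a diffeomorphism $\Psi\colon H\to T^s$ sending the orbits of $n$ to those of a constant vector $v$ with rationally independent components. Because $\Psi$ respects the oriented one-dimensional foliation, the push-forward $\Psi_*n$ is a strictly positive multiple of $v$, say $\Psi_*n=r\,v$ with $r>0$ smooth. Writing $\tilde\omega:=\Psi_*\omega$, so that $\dd\tilde\omega=\Psi_*\dd\omega$, the naturality of the interior product gives $\Psi_*(i_n\dd\omega)=i_{\Psi_*n}\dd\tilde\omega=r\,i_v\dd\tilde\omega$. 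Hence $i_n\dd\omega=0$ together with $r>0$ force $i_v\dd\tilde\omega=0$, and Theorem~\ref{ccoae} applies verbatim (with its ``$n$'' taken to be $v$) to yield $\dd\tilde\omega=0$, whence $\dd\omega=0$.

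Finally, with $\dd\omega=0$ established, the identities $i_n\omega=\omega(n)=0$ and $i_n\dd\omega=0$ exhibit $\omega$ as a \emph{closed basic} $1$-form, so it represents a class in $H^1_b(H)$, completing the proof. I expect the only delicate point to be the bookkeeping surrounding the phrase ``conjugate without parameter'': one must verify that $\Psi_*n$ is a genuinely positive multiple of the constant field $v$ (so that the vanishing of $i_v\dd\tilde\omega$ survives division by $r$) and that $\Psi_*$ intertwines $\dd$ with itself. As a consistency check, note that $\dd\omega$ is invariant under a rescaling $n'=e^{-f}n$, since $\omega'=\omega-\dd f$ by Eq.~(\ref{ren1}) gives $\dd\omega'=\dd\omega$; thus the order in which one applies Theorem~\ref{jdot} and Theorem~\ref{ccoae} is immaterial, all the substantive analysis being already absorbed into Theorem~\ref{ccoae}.
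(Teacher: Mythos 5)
Your proposal is correct and takes essentially the same route as the paper's own proof: fix the geodesic field $n$ from Theorem~\ref{jdot}, obtain $i_n\dd\omega=0$ from the dominant energy condition (Proposition~\ref{cooe}), note that $n$ is carried by the conjugacy to a positive multiple of the constant linear field so that Theorem~\ref{ccoae} applies, and conclude $\dd\omega=0$ with $\omega$ basic. Your explicit bookkeeping of the diffeomorphism $\Psi$ and the positive factor $r$ simply spells out what the paper compresses into the remark that $n$ is proportional to the linear vector field of the statement of Theorem~\ref{ccoae}.
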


\begin{proof}
We already know that we can find $n$ of zero surface gravity, so let $\omega$ be the associated 1-form from Eq.\ (\ref{pprt}) so that $\omega(n)=0$. By the null energy condition $i_n \dd \omega=0$ thus $\omega$ is a basic 1-form. Note that $n$ is proportional to the vector field $n'$ mentioned in the statement  of Theorem \ref{ccoae} (and there denoted $n$) thus $i_{n'}\dd \omega=0$, thus from that theorem we get $\dd \omega=0$. $\hfill$ \qed
\end{proof}

\section{Conclusions}

We proved that, in any spacetime dimension and under the null energy condition, every smooth connected totally geodesic compact null hypersurface (including compact Cauchy horizons) admits a smooth, future-directed, lightlike vector field $n$  of constant surface gravity.
That is, we resolved the open degenerate case by establishing the existence of a geodesic lightlike field whenever a complete generator exists. This showed that in the degenerate case, for a suitable choice of $n$, the 1-form $\omega$ is basic.   The proof uses techniques from ergodic theory, Hodge theory and the theory of Riemannian flows.

In the torus case with dense orbits we obtained, by using Fourier analysis, a stronger result, as the 1-form $\omega$ is proved to be actually closed, so defining a class in the basic cohomology group $H^1_b(H)$.

\section*{Acknowledgments}
This study was funded by the European Union - NextGenerationEU, in the framework of the PRIN Project (title) {\em Contemporary perspectives on geometry and gravity} (code 2022JJ8KER – CUP B53D23009340006). The views and opinions expressed in this article are solely those of the authors and do not necessarily reflect those of the European Union, nor can the European Union be held responsible for them.

\section*{Declarations}

\subsection*{Data availability statement}
Data sharing not applicable to this article as no datasets were generated or analysed during the current study.

\subsection*{Conflicts of Interest}
The authors have no relevant financial or non-financial interests to disclose.


\end{document}